\documentclass[a4paper,DIV12,11pt]{scrartcl}

\setkomafont{title}{\normalfont \LARGE}
\setkomafont{section}{\normalfont \Large \bfseries \boldmath}
\setkomafont{paragraph}{\normalfont \bfseries \boldmath}
\setkomafont{subsection}{\normalfont \large \bfseries\boldmath}
\setkomafont{subsubsection}{\normalfont \bfseries\boldmath}

\usepackage{amsmath,amsfonts,amssymb,amsthm}
\usepackage{cite}

\setkomafont{title}{\normalfont \LARGE}
\setkomafont{section}{\normalfont \Large \bfseries \boldmath}
\setkomafont{paragraph}{\normalfont \bfseries \boldmath}

\theoremstyle{remark}
\theoremstyle{plain}
\newtheorem{lemma}{Lemma}[section]
\newtheorem{theorem}[lemma]{Theorem}
\newtheorem{definition}[lemma]{Definition}
\newtheorem{remark}[lemma]{Remark}
\newtheorem{question}[lemma]{Question}
\newcommand{\kunsat}{\ensuremath{k\mathsf{UNSAT}}}
\newcommand{\threesat}{\ensuremath{3\mathsf{SAT}}}

\newcommand{\gcol}[1]{\ensuremath{k\mathtt{\text{-}Coloring}}}

\DeclareMathOperator{\DTime}{\mathsf{DTime}}
\DeclareMathOperator{\NTime}{\mathsf{NTime}}
\DeclareMathOperator{\poly}{poly}

\DeclareMathOperator{\probab}{\mathbb{P}}
\DeclareMathOperator{\expected}{\mathbb{E}}
\DeclareMathOperator{\supp}{supp}

\newcommand{\tiling}{\ensuremath{\mathtt{Tiling}}}

\newcommand{\nat}{\ensuremath{\mathbb{N}}}
\newcommand{\Dist}{\mathcal{D}}
\newcommand{\eps}{\varepsilon}
\newcommand{\SmoothedP}{\ensuremath{\mathsf{Smoothed\text{-}P}}}
\newcommand{\SmoothedPBM}{\ensuremath{\mathsf{Smoothed\text{-}PBM}}}
\newcommand{\SmoothedBPP}{\ensuremath{\mathsf{Smoothed\text{-}BPP}}}
\newcommand{\SmoothedRP}{\ensuremath{\mathsf{Smoothed\text{-}RP}}}
\newcommand{\AvgP}{\ensuremath{\mathsf{Avg\text{-}P}}}
\newcommand{\AvgPBM}{\ensuremath{\mathsf{Avg\text{-}PBM}}}

\newcommand{\ds}[1]{\ensuremath{#1_{\mathrm{ds}}}}
\newcommand{\oblds}[1]{\ensuremath{#1_{\mathrm{ds}}^{\mathrm{obl}}}}
\newcommand{\les}{\le_{\mathrm{smoothed}}}
\newcommand{\BH}{\ensuremath{\mathtt{BH}}}
\newcommand{\PCompSmo}{\mathsf{PComp}_{\mathrm{para}}}
\newcommand{\UBH}{U^{\mathtt{BH}}} 
\newcommand{\bin}{\operatorname{bin}}

\newcommand{\DistNP}{\ensuremath{\mathsf{DistNP}}}
\newcommand{\DP}{\ensuremath{\mathsf{P}}}
\newcommand{\NP}{\ensuremath{\mathsf{NP}}}
\newcommand{\RP}{\ensuremath{\mathsf{RP}}}
\newcommand{\EXPLIN}{\ensuremath{\mathsf{E}}}
\newcommand{\NEXPLIN}{\ensuremath{\mathsf{NE}}}
\newcommand{\DistNPSmo}{\ensuremath{\mathsf{Dist\text{-}NP}_{\mathrm{para}}}}

\title{Smoothed Complexity Theory\thanks{An extended abstract of this paper will appear in the \emph{Proceedings of
the 37th Int.\ Symp.\ on
Mathematical Foundations of Computer Science (MFCS 2012)}. Supported by DFG research grant BL 511/7-1.}}

\author{Markus Bl\"aser$^1$ \and Bodo Manthey$^2$}

\date{\small $^1$Saarland University, Department of Computer Science, \texttt{mblaeser@cs.uni-saarland.de} \\
$^2$University of Twente, Department of Applied Mathematics, \texttt{b.manthey@utwente.nl}}

\pagestyle{plain}

\begin{document}

\maketitle
\thispagestyle{plain}
\pagestyle{plain}

\begin{abstract}
Smoothed analysis is a new way of analyzing algorithms
introduced by Spielman and Teng (\emph{J. ACM}, 2004). 
Classical methods like worst-case or average-case analysis
have accompanying complexity classes, like $\DP$ and $\AvgP$,
respectively. While worst-case or average-case analysis 
give us a means to talk about the running time of a particular
algorithm, complexity classes allows us to talk
about the inherent difficulty of problems.

Smoothed analysis is a hybrid of worst-case and average-case analysis
and compensates some of their drawbacks. Despite its success for the analysis of
single algorithms and problems, there is no embedding of smoothed analysis
into computational complexity theory, which is necessary to classify problems
according to their intrinsic difficulty.

We propose a framework for smoothed complexity theory, define
the relevant classes, and prove some first hardness results (of bounded halting and tiling)
and tractability results (binary optimization problems, graph coloring, satisfiability).
Furthermore, we discuss extensions and shortcomings of our model and relate it to semi-random models.
\end{abstract}

\section{Introduction}

The goal of computational complexity theory is to classify
computational problems according to their intrinsic 
difficulty. While the analysis of algorithms
is concerned with analyzing, say,  the running time of
a particular algorithm, complexity theory rather analyses
the amount of resources that all algorithms need at least
to solve a given problem.

Classical complexity classes, like $\DP$, reflect
worst-case analysis of algorithms. A problem is
in $\DP$ if there is an algorithm whose running time
on all inputs of length $n$ is bounded by a polynomial
in $n$. Worst-case analysis has been a success 
story: The bounds obtained are valid for every input
of a given size, and, thus, we do not have to think about
typical instances of our problem. If an algorithm
has a good worst-case upper bound, then this is a very strong statement:
The algorithm will perform
well in practice. (For practical purposes, ``good upper bound'' of course also includes
constants and lower order terms.) 

However, some algorithms 
work well in practice
despite having a provably high worst-case running
time. The reason for this is that the worst-case
running time can be dominated by a few pathological
instances that rarely or never occur in practice.
An alternative to worst-case analysis is average-case analysis.
Many of the algorithms with poor worst-case but good practical performance have a good average running time.
This means that the expected running time with instances drawn according to
some fixed probability distribution is low.

In complexity-theoretic terms,
$\DP$ is the class of all problems 
that can be solved with polynomial worst-case running time.
In the same way, the class $\AvgP$ is the class
of all problems that have polynomial average-case
running time. Average-case complexity theory
studies the structural properties of average-case running time.
Bogdanov and Trevisan give a comprehensive survey of average-case complexity~\cite{AverageSurvey}.

While worst-case complexity has the drawback of being often pessimistic,
the drawback of average-case analysis is that random instances have often
very special properties with high probability.
These properties of random instances distinguish them from typical instances.
Since a random and a typical instance
is not the same, a good average-case running time does not
necessarily explain a good performance in practice.
In order to get a more realistic performance measure,
(and, in particular, to explain the speed of the simplex method),
Spielman and Teng have proposed a new way
to analyze algorithms called \emph{smoothed analysis}~\cite{Simplex}.
In smoothed analysis, an adversary chooses an instance, and then this
instance is subjected to a slight random perturbation.
We can think of this perturbation as modeling measurement errors or random noise or
the randomness introduced by taking, say, a random poll.
The perturbation is controlled by some parameter $\phi$, called the \emph{perturbation parameter}.
Spielman and Teng have proved that the simplex method
has a running time that is polynomial in the size of the
instance and the perturbation parameter~\cite{Simplex}. (More precisely, 
for any given instance,
the expected running time on the perturbed instance is bounded by a polynomial.)
Since then, the framework of smoothed analysis has been applied
successfully to a variety of algorithms
that have a good behavior in
practice (and are therefore widely used)
but whose worst-case running time indicates poor
performance~\cite{kMeans,LTRM,RandomKnapsack,Hoare,MoitraODonnell,Perceptron,Termination,SmoothedHirsch,WADS,2Opt}.
We refer to two recent surveys for a broader picture of smoothed analysis~\cite{SmoothedSurvey,SmoothedSurvey2}.

However, with only few exceptions~\cite{SmoothedBinary,SmoothedInteger}, smoothed analysis
has only been applied yet to single algorithms or single problems.
Up to our knowledge, there is currently no attempt to formulate a smoothed complexity
theory and, thus, to embed smoothed analysis into computational complexity.

This paper is an attempt to define a smoothed complexity theory, including notions of intractability,
reducibility, and completeness.
We define the class \SmoothedP\ (Section~\ref{sec:model}), which corresponds to problems that can be solved
smoothed efficiently,
we provide a notion of reducibility (Section~\ref{sec:reduction}),
and define the class \DistNPSmo, which is a smoothed analogue of \NP, and prove that it contains
complete problems (Section~\ref{sec:paradist}).
We continue with some basic observations (Section~\ref{sec:simple}).
We also add examples of problems in \SmoothedP\ (Sections~\ref{sec:integer} and~\ref{sec:graph}) and discuss the relationship of smoothed complexity to semi-random
models (Section~\ref{sec:semirandom}).
Finally, since this is an attempt of a smoothed complexity theory, we conclude with
a discussion of extension, shortcomings, and difficulties of our definitions (Section~\ref{sec:disc}).

\section{Smoothed Polynomial Time and \SmoothedP}
\label{sec:model}

\subsection{Basic Definitions}

In the first application of smoothed analysis to the simplex method~\cite{Simplex},
the strength of the perturbation has been controlled in terms of the standard deviation
$\sigma$ of the Gaussian perturbation.
While this makes sense for numerical problems, this model cannot be used
for general (discrete problems).
A more general form of perturbation models has been introduced by Beier and V\"ocking~\cite{RandomKnapsack}:
Instead of specifying an instance that is afterwards perturbed (which can also be viewed as the adversary
specifying the mean of the probability distribution according to which the instances are drawn), the adversary
specifies the whole probability distribution.
Now the role of the standard deviation $\sigma$ is taken over by the parameter $\phi$, which is an upper
bound for the maximum density of the probability distributions. For Gaussian perturbation,
we have $\sigma = \Theta(1/\phi)$. Because we do not want to restrict our theory to numerical
problems, we have decided to use the latter model.

Let us now define our model formally.
Our perturbation models are families 
of distributions $\Dist = (D_{n, x, \phi})$.
The length of $x$ is $n$ (so we could omit the index $n$
but we keep it for clarity). Note that length does not necessarily mean bit length, but depends
on the problem. For instance, it can be the number of vertices of the graph encoded by $x$.
For every $n$, $x$, and $\phi$,
the support of the distribution $D_{n,x,\phi}$ should be contained in the set
$\{0,1\}^{\le \poly(n)}$.
Let
\[
S_{n,x} = \bigl\{ y \mid  \text{$D_{n,x,\phi}(y) > 0$ for some $\phi$} \bigr\},
\] 
and let $N_{n,x} = |S_{n,x}|$.

For all $n$, $x$, $\phi$, and $y$,
we demand $D_{n,x,\phi}(y) \le \phi$.
This controls the strength of the perturbation
and restricts the adversary.
We allow $\phi \in [1/N_{n,x}, 1]$.
Furthermore, the values of $\phi$
are discretized, so that they can be described by at most
$\poly(n)$ bits.
The case
$\phi = 1$ corresponds to the worst-case complexity;
we can put all the mass on one string.
The case $\phi = 1/N_{n,x}$ models the average case;
here we usually have to put probability on an exponentially
large set of strings.
In general, the larger $\phi$, the more powerful the adversary.
We call such families $(D_{n,x,\phi})_{n,x,\phi}$ of probability distributions
\emph{parameterized families of distributions}.

Now we can specify what it means that an algorithm has smoothed polynomial running-time.
The following definition can also be viewed as a discretized version of
Beier and V\"ocking's definition~\cite{SmoothedBinary}.
Note that we do not speak about expected running-time, but about expected running-time
to some power $\eps$. This is because the notion of expected running-time
is not robust with respect to, e.g., quadratic slowdown.
The corresponding definition for average-case complexity is due to Levin~\cite{Levin}.
We refer to Bogdanov and Trevisan~\cite{AverageSurvey}
for a thorough discussion of this issue.

\begin{definition} \label{def:smoothed:1}
An algorithm $A$ has \emph{smoothed
polynomial running time} with respect
to the family $\Dist$ if there exists an $\eps > 0$ such that, for all $n$, $x$, and $\phi$, 
we have
\[\expected_{y \sim D_{n, x, \phi}} \bigl(t_A(y; n, \phi)^\eps\bigr)
   = O\bigl( n \cdot N_{n,x} \cdot \phi \bigr).
\]
\end{definition}

This definition implies that (average-)polynomial time is only required
if we have $\phi = O({\poly(n)}/N_{n,x})$. 
This seems to be quite
generous at first glance, but it is in accordance with,
e.g., Spielman and Teng's analysis of the simplex method~\cite{Simplex} or Beier and V\"ocking's
analysis of integer programs~\cite{SmoothedBinary};
they achieve polynomial time running time
only if they perturb all but at most $O(\log n)$ digits:
If we perturb a number with, say, a Gaussian of standard deviation $\sigma = 1/\poly(n)$,
then we expect that the $O(\log n)$ most significant bits remain untouched,
but the less significant bits are random.

In average-case complexity, one considers not decision problems alone, but decision problems
together with a probability distribution.
The smoothed analogue of this is that we consider
tuples $(L,\Dist)$, where $L \subseteq  \{0,1\}^\ast$ is
a decision problem and $\Dist$ is a parameterized family of distributions. We call such problems
\emph{parameterized distributional problems}.
The notion of smoothed polynomial running-time (Definition~\ref{def:smoothed:1}) allows us to define
what it means for a parameterized distributional problem to have polynomial smoothed complexity.

\begin{definition}
\label{def:smoothedp}
$\SmoothedP$ is the class of all $(L,\Dist)$ such that 
there is a deterministic algorithm $A$ 
with smoothed polynomial running time that decides $L$.
\end{definition}

We start with an alternative characterization of
smoothed polynomial time as it is known for average case
running time.
It basically says that an algorithm has smoothed polynomial running-time
if and only if its running-time has polynomially decreasing tail bounds.
Though smoothed polynomial time
is a generalization of average case polynomial time,
the characterization and the proof of equivalence
are similar.

\begin{theorem}\label{thm:alt1}
An algorithm $A$ has smoothed polynomial running time if and only if
there is an $\eps > 0$ and a polynomial $p$ such that
for all $n$, $x$, $\phi$, and $t$,
\[
\Pr_{y \sim D_{n,x,\phi}}[t_A(y;n,\phi) \ge t] 
         \le \frac {p(n)}{t^\eps} \cdot N_{n,x} \cdot \phi.
\]
\end{theorem}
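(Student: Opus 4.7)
The plan is to prove the two implications separately via the standard moment--tail conversion, analogous to Levin's equivalence for average-case polynomial time.

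For the forward direction, I would apply Markov's inequality to the non-negative random variable $t_A(y;n,\phi)^{\eps}$. If $\expected_{y \sim D_{n,x,\phi}}[t_A(y;n,\phi)^\eps] \le C \cdot n \cdot N_{n,x}\cdot \phi$ for some constant $C$, then Markov gives
\[
\Pr_{y \sim D_{n,x,\phi}}\bigl[t_A(y;n,\phi) \ge t\bigr] = \Pr\bigl[t_A(y;n,\phi)^\eps \ge t^\eps\bigr] \le \frac{C\,n\,N_{n,x}\,\phi}{t^\eps},
\]
and taking $p(n) = Cn$ (with the same exponent $\eps$) yields the desired tail bound.

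For the backward direction, I would pick a strictly smaller exponent $\eps' < \eps$ and compute the $\eps'$-th moment via the standard tail integral $\expected[t_A^{\eps'}] = \eps' \int_{0}^{\infty} \Pr[t_A \ge t]\,t^{\eps'-1}\,dt$. Splitting at a threshold $t_0$, using the trivial bound $\Pr \le 1$ on $[0,t_0]$ and the hypothesis $\Pr[t_A \ge t] \le p(n)N_{n,x}\phi / t^\eps$ on $[t_0,\infty)$ (where $\eps' < \eps$ is precisely what makes the tail integral converge), I obtain
\[
\expected[t_A^{\eps'}] \le t_0^{\eps'} + \frac{\eps'}{\eps - \eps'}\, p(n)\,N_{n,x}\,\phi \cdot t_0^{\eps'-\eps}.
\]
Balancing the two summands by setting $t_0 = (p(n)N_{n,x}\phi)^{1/\eps}$ leads to $\expected[t_A^{\eps'}] = O\bigl((p(n)N_{n,x}\phi)^{\eps'/\eps}\bigr)$.

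The main obstacle, and the reason for insisting on $\eps'$ strictly smaller than $\eps$, is reconciling the last bound with Definition~\ref{def:smoothed:1}, which demands $O(n\cdot N_{n,x}\phi)$ linear in $n N_{n,x}\phi$ rather than a fractional power of it. Here I would invoke the model assumption $\phi \ge 1/N_{n,x}$, which forces $N_{n,x}\phi \ge 1$ and hence $(N_{n,x}\phi)^{\eps'/\eps} \le N_{n,x}\phi$. The remaining factor $p(n)^{\eps'/\eps}$ is then tamed by choosing $\eps'$ small enough relative to the degree $d$ of $p$: taking $\eps' \le \eps/(d+1)$ makes $p(n)^{\eps'/\eps} = O(n)$, at which point the moment bound required by Definition~\ref{def:smoothed:1} follows with the exponent $\eps'$.
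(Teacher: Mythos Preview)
Your proposal is correct and follows essentially the same route as the paper: Markov's inequality on $t_A^{\eps}$ for the forward direction, and for the converse a smaller exponent $\eps' = \eps/(d+1)$ (the paper writes $\eps' = \eps/(c+2)$ with $p(n)=n^c$) together with the model assumption $N_{n,x}\phi \ge 1$. The only cosmetic difference is that you use the continuous tail-integral formula and balance the split point at $t_0 = (p(n)N_{n,x}\phi)^{1/\eps}$, whereas the paper uses the discrete tail sum and simply splits at $t_0 = n$; both yield the required $O(n\cdot N_{n,x}\phi)$ bound.
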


\begin{proof}
Let $A$ be an algorithm whose running time $t_A$ fulfills
\[ \expected_{y \sim D_{n, x, \phi}} \bigl(t_A(y; n, \phi)^\eps\bigr)
   = O\left( n \cdot {N_{n,x}} {\phi} \right) .
\]
The probability that the running time exceeds a certain value $t$ can be bounded by
Markov's inequality:
\begin{align*}
  \probab (t_A(y; n, \phi)\ge t) & = 
    \probab \bigl(t_A(y; n, \phi)^\eps \ge t^\eps \bigr) \\
&    \le \frac{\expected_{y \sim D_{n, x, \phi}}\bigl(t_A(y; n, \phi)^\eps \bigr)}
             {t^\eps}
    = O\left(n \cdot N_{n,x} \phi \cdot t^{-\eps} \right) .
\end{align*}
For the other direction, assume that 
\[ \Pr_{y \sim D_{n,x,\phi}}[t_A(y;n,\phi) \ge t] 
         \le \frac {n^c}{t^\eps} \cdot N_{n,x} \phi
\]
for some constants $c$ and $\eps$.
Let $\eps' = \eps / (c + 2)$. Then we have
\begin{align*}
  \expected_{y \sim D_{n, x, \phi}} \bigl(t_A(y; n, \phi)^{\eps'}\bigr)
     & = \sum_{t} \probab (t_A(y; n , \phi)^{\eps'} \ge t) \\
     & \leq n + \sum_{t \ge n}  \probab (t_A(y; n , \phi) \ge t^{1/\eps'}) \\
     & \leq n + \sum_{t \ge n} {t^{-2}} \cdot N_{n,x} \phi
      = n + O(N_{n,x} \phi).
\end{align*}
\end{proof}

\subsection{Heuristic Schemes}
\label{ssec:heuristic}

A different way to think about efficiency in the smoothed setting is via so-called
heuristic schemes. This notion comes from average-case complexity~\cite{AverageSurvey}, but
can be adapted to our smoothed setting. The notion of a heuristic scheme comes from the observation that,
in practice, we might only be able to run our algorithm for a polynomial number of steps.
If the algorithms does not succeed within this time bound, then it ``fails'', i.e.,
it does not solve the given instance. The failure probability decreases polynomially
with the running time that we allow. The following definition captures this.

\begin{definition}
Let $(L,\Dist)$ be a smoothed distributional problem.
An algorithm $A$ is an errorless heuristic scheme for $(L,\Dist)$
if there is a polynomial $q$ such that
\begin{enumerate}
\item For every $n$, every $x$, every $\phi$, 
every $\delta > 0$, and every $y \in \supp D_{n,x,\phi}$, we have
$A(y;n,\phi,\delta)$ outputs either $L(y)$ or $\bot$.
\item For every $n$, every $x$, every $\phi$, 
every $\delta > 0$, and every $y \in \supp D_{n,x,\phi}$, we have
$t_A(y;n,\delta) \le q(n,N_{n,x} \phi,1/\delta)$.
\item For every $n$, $x$, $\phi$, $\delta > 0$, and $y \in \supp D_{n,x,\phi}$,
we have
$\Pr_{y \sim D_{n,x,\phi}}[A(y;n,\phi,\delta) = \bot] \le \delta$.
\end{enumerate}
\end{definition}

With the definition of a heuristic scheme, we can prove that heuristic schemes precisely
characterize \SmoothedP.

\begin{theorem}
\label{thm:heuristic}
$(L,\Dist) \in \SmoothedP$ if and only if $(L,\Dist)$ has
an errorless heuristic scheme.
\end{theorem}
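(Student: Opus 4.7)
The plan is to prove the two directions separately, mirroring the standard equivalence from average-case complexity but using the tail bound from Theorem~\ref{thm:alt1} to handle the smoothed setting. Both implications are essentially conversions between "bounded $\eps$-th moment of running time" and "polynomially small failure probability at each time cutoff".

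For the forward direction, suppose $(L,\Dist)\in\SmoothedP$, witnessed by a deterministic algorithm $A$ and some $\eps>0$. By Theorem~\ref{thm:alt1}, there is a polynomial $p$ with
\[
\Pr_{y\sim D_{n,x,\phi}}[t_A(y;n,\phi)\ge t]\;\le\;\frac{p(n)}{t^\eps}\,N_{n,x}\phi.
\]
I define the scheme $A'(y;n,\phi,\delta)$ to simulate $A$ for exactly $T:=\lceil(p(n)\,N_{n,x}\phi/\delta)^{1/\eps}\rceil$ steps; output $A$'s answer if it halts, else $\bot$. Condition~(1) holds since $A$ decides $L$, condition~(2) holds because $T$ is polynomial in $n$, $N_{n,x}\phi$, and $1/\delta$, and condition~(3) follows directly from the tail bound applied to $t=T$.

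For the backward direction, let $A$ be an errorless heuristic scheme with polynomial bound $q(n,m,1/\delta)\le(n\cdot m\cdot 1/\delta)^c$. I build a deterministic $B(y;n,\phi)$ that, for $i=1,2,\dots$, runs $A(y;n,\phi,2^{-i})$ for its allotted $T_i:=q(n,N_{n,x}\phi,2^i)\le(n\,N_{n,x}\phi)^c 2^{ci}$ steps and returns the first non-$\bot$ answer; to guarantee termination on every single $y$ (not just almost every one), I dovetail with a brute-force decider for $L$, whose exponential cost can be absorbed by a single final bucket. The cumulative cost up to and including stage $i$ is at most $i\cdot T_i\le(n\,N_{n,x}\phi)^{c}\,2^{(c+1)i}$, while the probability of reaching stage $i$ is at most $2^{-(i-1)}$ by condition~(3) at the previous stage. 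Picking $\eps'>0$ small enough that $(c+1)\eps'<1$, I sum
\[
\expected\bigl(t_B^{\eps'}\bigr)\;\le\;\sum_{i\ge 1}2^{-(i-1)}\bigl((n\,N_{n,x}\phi)^{c}\,2^{(c+1)i}\bigr)^{\eps'}\;=\;(n\,N_{n,x}\phi)^{c\eps'}\cdot O(1),
\]
and a further shrink of $\eps'$ so that $c\eps'\le 1$ (together with $N_{n,x}\phi\ge 1$) gives the required $O(n\cdot N_{n,x}\cdot\phi)$ bound.

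The main obstacle is the backward direction: getting the two "independent" polynomial bounds, the per-stage running time $q$ and the per-stage failure probability $\delta$, to combine into a single bounded $\eps$-th moment. This forces $\eps$ to be chosen after $q$ is fixed, which is why Definition~\ref{def:smoothed:1} uses an existentially quantified $\eps$ rather than $\eps=1$; it also forces us to dovetail with a brute-force procedure, since a heuristic scheme is a priori allowed to return $\bot$ forever on measure-zero adversarial inputs, whereas a \SmoothedP\ algorithm must decide $L$ on every input. Verifying these details is routine once $\eps'$ is chosen, so the real work is lining up the constants $c$ and $\eps'$ and confirming that $N_{n,x}\phi\ge 1$ (which follows from the admissible range $\phi\in[1/N_{n,x},1]$) lets us convert the $(c\eps')$-th power of $n\,N_{n,x}\phi$ into the linear bound demanded by Definition~\ref{def:smoothed:1}.
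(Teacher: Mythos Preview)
Your approach matches the paper's in both directions: truncate-and-output-$\bot$ for the forward implication, and halving $\delta$ geometrically for the backward implication, with the exponent $\eps'$ chosen after the polynomial $q$ so that the moment sum converges. The one substantive difference is your termination argument in the backward direction.

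You worry that the scheme might return $\bot$ forever on some inputs and propose dovetailing with a ``brute-force decider for $L$'' whose exponential cost you absorb at the end. This step is both unnecessary and unjustified: nothing in the hypotheses says $L$ lies in $\EXPLIN$, or indeed in any complexity class, so you cannot invoke a brute-force decider with a controlled running-time bound. The paper avoids this entirely by exploiting discreteness: every $y\in\supp D_{n,x,\phi}$ has $D_{n,x,\phi}(y)>0$, and once $2^{-i}<D_{n,x,\phi}(y)$, condition~(3) of the heuristic-scheme definition forces $A(y;n,\phi,2^{-i})\ne\bot$ (otherwise the $\bot$-probability would already be at least $D_{n,x,\phi}(y)>2^{-i}$). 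So the iterated scheme halts on every support input with no auxiliary decider. Your ``measure-zero adversarial inputs'' concern is imported from the continuous setting and simply does not arise here; drop the dovetailing and use the paper's argument instead.
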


\begin{proof}
Let $A$ be an algorithm for $(L,\Dist)$.
By Theorem~\ref{thm:alt1}, 
the probability that
\[
  \probab (t_A(y; n, \phi)\ge t) 
    = O\left(n \cdot N_{n,x} \phi \cdot t^{-\eps} \right) .
\]
We get an errorless heuristic scheme $B$ 
from $A$ as follows: Simulate $A$ for 
$(n \cdot N_{n,x} \phi / \delta)^{1/\eps}$ steps. 
If $A$ stops within these number of steps, then output whatever
$A$ outputs. Otherwise, output $\bot$. By the choice
of the parameters, the probability that $B$ outputs $\bot$
is bounded by $\delta$. 

For the other direction, let $A$ be an errorless heuristic scheme
for $(L,\Dist)$. We get an algorithm with smoothed
polynomial running time by first running $A$ with $\delta = 1/2$,
then with $\delta = 1/4$, and in the $i$th iteration 
with $\delta = 1/2^i$. Whenever $A$ does not answer $\bot$,
$B$ gives the same answer and stops. $B$ will eventually
stop, when $\delta < D_{n,x,\phi}(y)$.
For $i$ iterations, $B$ needs
\[
   \sum_{j = 1}^i q(n,N_{n,x}\phi, 2^j) 
      \le \poly(n, N_{n,x}\phi) \cdot 2^{ci}
\]
for some constant $c$.  $B$ stops after $i$ iterations
for all but a $2^{-i}$ fraction of the input. Thus
$B$ has smoothed polynomial running time. (Choose $\eps < 1/c$.)
\end{proof}

\subsection{Alternative Definition: Bounded Moments}
\label{ssec:boundedmoments}

At first glance, one might be tempted to use ``expected running time'' for the definition
of \AvgP\ and \SmoothedP. However, as mentioned above, simply using the expected running time
does not yield a robust measure. This is the reason why the expected value of the running time raised
to some (small) constant power is used.
R\"oglin and Teng~\cite[Theorem 6.2]{RoeglinTeng} have shown that for integer programming
(more precisely, for binary integer programs with a linear objective function),
the expected value indeed provides a robust measure.
They have proved that
a binary optimization problem can be solved in expected polynomial time if and only
if it can be solved in worst-case pseudo-polynomial time.
The reason for this is that
all finite moments of the Pareto curve are polynomially bounded. Thus, a polynomial slowdown
does not cause the expected running time to jump from polynomial to exponential.

As far as we are aware, this phenomenon, i.e., the case that all finite moments
have to be bounded by a polynomial, has not been studied yet in average-case complexity.
Thus, for completeness, we define the corresponding average-case and smoothed complexity classes
as an alternative to \AvgP\ and \SmoothedP.

\begin{definition}
\begin{enumerate}
\item An algorithm has \emph{robust smoothed
polynomial running time} with respect to $\Dist$ if, for all fixed $\eps > 0$ and for every $n$, $x$, and $\phi$, 
we have
\[
\expected_{y \sim D_{n, x, \phi}} \bigl(t_A(y; n, \phi)^\eps\bigr)
   = O\bigl( n \cdot N_{n,x} \cdot \phi \bigr).
\]
\SmoothedPBM\ is the class of all $(L, \Dist)$ for which there exists a deterministic
algorithm with robust smoothed polynomial running time. (The ``PBM'' stands for ``polynomially bounded moments''.)

\item An algorithm $A$ has \emph{robust average
polynomial running time} with respect to $\Dist$ if, for all fixed $\eps > 0$ and for all $n$,
we have
$\expected_{y \sim D_{n}} \bigl(t_A(y)^\eps\bigr)
   = O( n )$.
\AvgPBM\ contains all $(L, \Dist)$ for which there exists a deterministic
algorithm with robust smoothed polynomial running time.
\end{enumerate}
\end{definition}

From the definition, we immediately get $\SmoothedPBM \subseteq \SmoothedP$ and $\AvgPBM \subseteq \AvgP$.
Moreover, if $L \in \DP$, then $L$ together with any family of distributions is also in \SmoothedP\ and \AvgP\
and also in \SmoothedPBM\ and \AvgPBM.
From R\"oglin and Teng's result~\cite{RoeglinTeng}, one might suspect
$\AvgP = \AvgPBM$ and $\SmoothedP = \SmoothedPBM$,
but this does not hold.

\begin{theorem}
\label{thm:separation}
$\AvgPBM \subsetneq \AvgP$ and $\SmoothedPBM \subsetneq \SmoothedP$.
\end{theorem}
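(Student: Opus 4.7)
My plan is to exhibit an explicit distributional problem $(L,D) \in \AvgP\setminus\AvgPBM$ and a parameterised one $(L,\Dist) \in \SmoothedP\setminus\SmoothedPBM$, by combining the deterministic time hierarchy theorem with a distribution that places small but nonnegligible mass on a concentrated ``hard'' set, and leaves the rest of the mass on a trivial easy instance.

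First I would pick $L$ via the time hierarchy so that $L \in \DTime(T(n)) \setminus \DTime(T(n)/\log T(n))$ for a slowly-growing bound such as $T(n) = 2^{n^{1/3}}$, and let $A^*$ be the canonical $T(n)$-time decision procedure for $L$. The distribution $D_n$ I would take puts mass $1 - p(n)$ on a trivial instance like $0^n$ (decided by $A^*$ in linear time) and the remaining mass $p(n) := 2^{-n^{1/3}}$ spread uniformly over a polynomial-size set $H_n \subseteq \{0,1\}^n$ of hard instances. Running $A^*$ then yields $\expected[t_{A^*}^{\eps_0}] \le p(n) T(n)^{\eps_0} + \poly(n) = 2^{(\eps_0-1)n^{1/3}} + \poly(n)$, which is polynomial (and $O(n)$ when the easy case is linear) for every $\eps_0 < 1$; this places $(L,D)$ in $\AvgP$.

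For the harder direction I would design $H_n$ so that, for every Turing machine $B$ of description length at most $\log n$ and every clock $T' \le T(n)/n^{1/3}$, the set $H_n$ contains a diagonalisation witness $x_{B,T'}^{(n)}$ forcing any correct decision procedure running $B$ within the clock $T'$ to be wrong on $x_{B,T'}^{(n)}$ — equivalently, forcing any correct algorithm following $B$ to exceed $T'$ steps. For any fixed correct algorithm $A$ for $L$, one has $|A| \le \log n$ for all sufficiently large $n$, and hence $\expected[t_A^{\eps_1}] \ge (p(n)/|H_n|)\cdot (T(n)/n^{1/3})^{\eps_1} = 2^{(\eps_1-1)n^{1/3}}/\poly(n)$, which is super-polynomial for every fixed $\eps_1 > 1$; so $(L,D) \notin \AvgPBM$. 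The smoothed separation is parallel: interpret $D_n$ as a parameterised family $(D_{n,x,\phi})$ in which the adversarial $x$ selects which sub-collection of $H_n$ receives mass and $\phi$ is chosen so that $N_{n,x}\cdot\phi$ is polynomial, then repeat the same calculations against the slack $O(n\cdot N_{n,x}\cdot\phi)$ permitted by Definition~\ref{def:smoothed:1}.

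The main obstacle I anticipate is turning the worst-case lower bound from the time hierarchy — which only provides one algorithm-dependent slow input per candidate decision procedure — into a distributional lower bound over a single small fixed set $H_n$. I would overcome this via a robust diagonalisation: enumerate all pairs $(B,T')$ with $|B| \le \log n$ and $T' \le T(n)/n^{1/3}$, compute the usual diagonalisation witness $x_{B,T'}^{(n)} \in \{0,1\}^n$ for each pair, and collect these polynomially-many witnesses into $H_n$. This ensures that every eventually-enumerated algorithm is defeated on an element of the support of $D$, so the worst-case lower bound propagates into the distributional setting required to violate the bounded-moments condition.
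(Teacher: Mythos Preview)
Your approach differs from the paper's. The paper takes $L'\in\DTime(2^n)\setminus\DTime(2^{o(n)})$, pads to $L=\{x0^n:|x|=n,\ x\in L'\}$, and crucially equips the $x$-part with the \emph{universal} (Kolmogorov-based) distribution $\Dist'$, under which average-case complexity equals worst-case complexity~\cite{LiVit}. Membership in $\AvgP$ follows because the hard part carries total weight $2^{-n}$ while $L'\in\DTime(2^n)$; non-membership in $\AvgPBM$ follows from Jensen: $\expected_{\Dist'}[t^c]\ge(\expected_{\Dist'}[t])^c=2^{c\,\Omega(n)}$, which beats the $2^n$ slack once $c$ is large. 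This is much shorter than your route but uses a non-computable ensemble; your explicit-witness construction, carried out correctly, would yield a polynomial-time computable one and thus address the second half of the paper's open Question following the theorem.

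That said, your description has a real slip. You enumerate pairs $(B,T')$ with $T'\le T(n)/n^{1/3}=2^{n^{1/3}}/n^{1/3}$, which is exponentially many values of $T'$, so ``polynomially-many witnesses'' is false as written. More importantly, ``the usual diagonalisation witness $x_{B,T'}^{(n)}$'' only makes sense if $L$ is \emph{itself} the diagonal language $\{x:M_x\text{ does not accept }x\text{ within }f(|x|)\text{ steps}\}$, not an arbitrary time-hierarchy language --- and in that case the witness is simply the length-$n$ padding of the code of $B$, independent of $T'$. Both issues disappear if you take $L$ to be the diagonal language explicitly and set $H_n$ to be the $O(n)$ padded codes of machines of description at most $\log n$. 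Finally, make sure $0^n$ is genuinely decided in linear time (arrange that $0^n$ encodes no machine, or hard-wire its membership), since your $\AvgP$ upper bound hinges on this.
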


\begin{proof}
We only prove the theorem for average-case complexity. The proof
for the smoothed complexity case is almost identical.

By the time hierarchy theorem~\cite{AroraBarak}, there is a language
$L' \in \DTime(2^{n})$ such that $L' \notin \DTime(2^{o(n)})$.
Consider the following language $L = \{x0^{n} \mid |x| = n, x \in L'\}$.
Let $\Dist' = (D'_n)$ be a hard probability distribution for $L'$, i.e.,
$(L', \Dist')$ is as hard to solve as $L'$ in the worst case~\cite{LiVit,LiVitBook}

Let $\Dist = (D_n)$ be given as follows:
\[
  D_n(xy) = \begin{cases}
   2^{-n} \cdot D'_n(x) & \text{if $|x| = n$ and $y = 0^n$ and} \\
   2^{-2n} & \text{otherwise.}
  \end{cases}
\]
Since $L' \in \DTime(2^n)$, we have $(L, \Dist) \in \AvgP$:
$L$ can be decided in expected time time $2^{-n} \cdot 2^n + O(n) = O(n)$.
Now we prove that $(L, \Dist) \notin \AvgPBM$.
If $(L, \Dist)\in \AvgPBM$ were true, then 
$2^{-n} \cdot \expected_{x \sim D'_n} (t(x)^c)$ would be bounded by a polynomial
for all fixed $c$. Here, $t$ is the time needed to solve the $L'$ instance $x$.

Our choice of $\Dist'$, Jensen's inequality, and the fact that $L' \notin \DTime(2^{o(n)})$
imply that $\expected_{x \sim D'_n} (t(x)^c) \geq
\expected_{x \sim D'_n} (t(x))^c = 2^{c \cdot \Omega(n)}$.
Thus, for some sufficiently large $c$,
$2^{-n} \cdot \expected_{x \sim D'_n} (t(x)^c)$ exceeds any polynomial.
\end{proof}

\begin{question}
A natural question to ask is:
Does there exist a language $L \in \NP$ together with some ensemble $\Dist$
such that $(L, \Dist)$
separates $\AvgP$ from $\AvgPBM$ and $\SmoothedP$ from $\SmoothedPBM$?
Does there exist some $L$ together with a computable ensemble $\Dist$
that separates these classes?
\end{question}
We conjecture that, assuming the exponential time hypothesis (ETH)~\cite{ETH}, such an $L \in \NP$
exists to separate $\AvgP$ from $\AvgPBM$ and $\SmoothedP$ from $\SmoothedPBM$. Given the ETH, $\threesat$
requires time $2^{\Omega(n)}$ in the worst case,
thus also on average if we use the universal distribution. This holds even if we restrict
$\threesat$ to $O(n)$ clauses. However, $n$ is here the number of variables, not the bit length
of the input, which is roughly $\Theta(n \log n)$. Thus, a direct application of the ETH seems to
be impossible here.

\section{Disjoint Supports and Reducibility}
\label{sec:reduction}

The same given input $y$ can appear with very high and with
very low probability at the same time. What sounds like a
contradiction has an easy explanation: $D_{n,x,\phi}(y)$ can
be large whereas $D_{n,x',\phi}(y)$ for some $x' \neq x$ is small.
But if we only see $y$, we do not know whether $x$ or $x'$ was
perturbed. This causes some problems when one wants to develop a
notion of reduction and completeness.

For a parameterized distributional problem $(L,\Dist)$, let
\[
   \ds L = \{ \langle x,y \rangle \mid 
              \text{$y \in L$ and $|y| \le \poly(|x|)$} \}.
\]
The length of $|y|$ is bounded by the same polynomial
that bounds the length of the strings in any $\supp D_{n,x,\phi}$.
We will interpret a pair $\langle x,y\rangle$ as
``$y$ was drawn according to $D_{n,x,\phi}$''.
With the notion of \ds L, we can now define a reducibility between
parameterized distributional problems.
We stress that, although the definition below involves \ds L\ and \ds{L'},
the reduction is defined for pairs $L$ and $L'$ and neither of the two is required to
be a disjoint-support language. This means that, for $(L, \Dist)$, the supports of $D_{n,x,\phi}$
for different $x$ may intersect. And the same is allowed for $(L', \Dist')$.

\begin{definition}
\label{def}
Let $(L,\Dist)$ and $(L',\Dist)$ be two parameterized distributional
problems. $(L,\Dist)$ reduces to $(L',\Dist')$
(denoted by ``$(L,\Dist) \les (L',\Dist')$'') if there
is a polynomial time computable function $f$ 
such that for every $n$, every $x$, every $\phi$
and every $y \in \supp D_{n,x,\phi}$
the following holds:
\begin{enumerate}
\item $\langle x,y \rangle \in \ds L$ if and only if 
$f(\langle x,y \rangle;n,\phi) \in \ds L'$.
\item There exist polynomials $p$ and $m$ such that, for
every $n$, $x$, and $\phi$ and
every $y' \in \supp D'_{m(n),f_1(\langle x,y \rangle ;n,\phi),\phi}$,
we have
\[
\textstyle
\sum_{y: f_2(\langle x, y \rangle ;n,\phi) = y'}
      D_{n,x,\phi}(y) \le p(n) D_{m(n),f_1(\langle x, y \rangle ;n,\phi),\phi}(y'),
\]
where $f(\langle x,y \rangle;n,\phi)
= \langle f_1 (\langle x,y \rangle;n,\phi),
          f_2 (\langle x,y \rangle;n,\phi) \rangle$.
          \label{item2}
\end{enumerate}
\end{definition}

\begin{remark}
Note that we could also allow that $\phi$ on the right-hand side is polynomially
transformed. However, we currently do not see how to benefit from this.
\end{remark}

It is easy to see that $\les$ is transitive.
Ideally, $\SmoothedP$ should be closed under this type of reductions.
However, we can only show this for the related class
of problems with disjoint support.

\begin{definition}
$\ds \SmoothedP$ is the set of all distributional
problems with disjoint supports such that there is an
algorithm $A$ for $\ds L$ with smoothed polynomial running time.
(Here, the running time on $\langle x,y \rangle$ is
defined in the same way as in Definition~\ref{def:smoothed:1}.
Since $|y| \le \poly(|x|)$ for a pair $\langle x,y\rangle \in \ds L$,
we can as well measure the running time in $|x|$.)
\end{definition}

Now, $\ds \SmoothedP$\ is indeed closed under the above type of reductions.

\begin{theorem}
\label{thm:closed}
If $(L,\Dist) \les (L',\Dist')$ and $(\ds L',\Dist') \in \ds \SmoothedP$,
then $(\ds L,\Dist) \in \ds \SmoothedP$.
\end{theorem}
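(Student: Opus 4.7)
The plan is to build an algorithm $A$ for $\ds L$ by composing the reduction $f$ with the assumed algorithm $A'$ for $\ds L'$. On input $\langle x,y\rangle$ with parameters $n$ and $\phi$, $A$ first computes $\langle x',y'\rangle := f(\langle x,y\rangle;n,\phi)$ in polynomial time $q(n)$ and then returns $A'(\langle x',y'\rangle;m(n),\phi)$. Item~1 of Definition~\ref{def} gives correctness immediately; all the work lies in the running-time analysis.

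Let $\eps'>0$ be the exponent witnessing that $A'$ has smoothed polynomial running time, so $\expected_{y'\sim D'_{m(n),x',\phi}}[t_{A'}^{\eps'}]=O(m(n)\cdot N'_{m(n),x'}\cdot\phi)$. Since $t_A\le q(n)+t_{A'}$ and $(a+b)^\eps\le a^\eps+b^\eps$ for $\eps\in(0,1]$, I obtain $t_A^\eps\le q(n)^\eps+t_{A'}^\eps$, so it suffices to bound $\expected_{y\sim D_{n,x,\phi}}[t_{A'}(\langle x',y'\rangle;m(n),\phi)^\eps]$ for a suitable $\eps\le\eps'$.

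The crux is to transfer this expectation from $y\sim D_{n,x,\phi}$ to $y'\sim D'_{m(n),x',\phi}$ using the domination bound of item~\ref{item2}. Grouping the sum by the value $y'=f_2(\langle x,y\rangle;n,\phi)$ yields
\[
\expected_{y\sim D_{n,x,\phi}}[t_{A'}^\eps]
= \sum_{y'} t_{A'}(\langle x',y'\rangle;m(n),\phi)^\eps \!\!\sum_{y:\, f_2(\langle x,y\rangle;n,\phi)=y'}\!\! D_{n,x,\phi}(y)
\le p(n)\cdot\expected_{y'\sim D'_{m(n),x',\phi}}[t_{A'}^\eps].
\]
Plugging in the smoothed bound for $A'$ (and invoking Jensen's inequality to pass from exponent $\eps'$ down to $\eps$ if needed) gives $\expected[t_A^\eps]=O(\poly(n)\cdot N'_{m(n),x'}\cdot\phi)$.

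The main obstacle is reconciling this bound with the required form $O(n\cdot N_{n,x}\cdot\phi)$ of Definition~\ref{def:smoothed:1}: we need $N'_{m(n),x'}$ to be polynomially related to $N_{n,x}$. This is consonant with the spirit of the reduction, since $f_2$ is many-to-one from the $y$-support into the $y'$-support and, by item~\ref{item2}, its image carries $D'_{m(n),x',\phi}$-mass at least $1/p(n)$; a final downwards adjustment of $\eps$ then absorbs the leftover polynomial overhead in $n$. A minor subtlety underlying the regrouping step is that $x'=f_1(\langle x,y\rangle;n,\phi)$ must be effectively independent of $y$, which is the natural reading of item~\ref{item2} and the only reading under which the quantifier ``for every $y'\in\supp D'_{m(n),f_1(\langle x,y\rangle;n,\phi),\phi}$'' is unambiguous.
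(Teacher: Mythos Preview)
Your approach differs from the paper's: instead of bounding expected running time directly, the paper passes to errorless heuristic schemes via Theorem~\ref{thm:heuristic}. If $A'$ is a heuristic scheme for $(\ds{L'},\Dist')$, one sets $A(\langle x,y\rangle;n,\phi,\delta)=A'(f(\langle x,y\rangle;n,\phi);m(n),\phi,\delta/p(n))$ and verifies only the failure-probability clause: letting $B$ be the set of $y'$ on which $A'$ outputs~$\bot$, the domination condition gives $\Pr_{y\sim D_{n,x,\phi}}[A=\bot]\le p(n)\,D'_{m(n),x',\phi}(B)\le\delta$ in two lines. Domination is tailor-made for transferring a probability-mass bound of this kind, whereas your route has to push a full expectation through it.

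Your direct argument is sound up through the regrouping step, but the resolution you offer for the ``main obstacle'' is a genuine gap. Lowering $\eps$ via Jensen absorbs only polynomial-in-$n$ overhead (this works because $N_{n,x}\phi\ge 1$, so raising the bound to a power $\alpha\le 1$ only shrinks it); it cannot absorb the ratio $N'_{m(n),x'}/N_{n,x}$, which nothing in Definition~\ref{def} forces to be polynomially bounded and which may be exponential in $n$. Your ``mass at least $1/p(n)$'' observation yields a \emph{lower} bound on the number of points in the image of $f_2$, not an upper bound on $N'_{m(n),x'}$, so it does not bridge the gap. (The paper's proof checks only the failure-probability clause and is silent on the heuristic-scheme running-time clause, which in fact needs the same control on $N'_{m(n),x'}\phi$; so the issue is arguably shared, but your explicit fix is incorrect as written.) Your side remark that $x'=f_1(\langle x,y\rangle;n,\phi)$ must be independent of $y$ is correct and matches the paper's implicit usage.
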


\begin{proof}
Let $A'$ be a an errorless heuristic scheme for $(\ds L',\Dist')$.
Let $f$ be the reduction from $(L,\Dist)$ to $(L',\Dist')$
and let $p$ and $m$ be the corresponding polynomials.

We claim that $A(\langle x, y \rangle ;n,\phi,\delta) 
= A'(f(\langle x,y \rangle;n,\phi);m(n),\phi,\delta/p(n))$
is an errorless heuristic scheme for $(\ds L,\Dist)$.
To prove this, let 
\[
B = \{y' \in \supp D'_{m(n),f_1(\langle x,y \rangle ;n,\phi),\phi} \mid
A'(\langle f(\langle x,y \rangle;n,\phi),y'\rangle ;m(n),\phi,\delta/p(n)) = \bot\}
\]
be the set of string
on which $A'$ fails.

Because $A'$ is a heuristic scheme, we have
$D'_{m(n),f_1(\langle x,y\rangle ;n,\phi),\phi}(B) \le \delta / p(n)$.
Therefore,
\begin{align*}
 &\qquad \Pr_{y \sim D_{n,x,\phi}} (A(\langle x,y \rangle;n,\phi,\delta) = \bot) \\
  &   = \Pr_{y \sim D_{n,x,\phi}}(A'(f(\langle x,y \rangle ;n,\phi);m(n),\phi,\delta/p(n) = \bot) \\
    & = \sum_{y: f_2(\langle x,y \rangle ;n,\phi) \in B} D_{n,x,\phi}(y) \\
    & \le \sum_{y' \in B} p(n) D'_{m(n);f_1(\langle x,y \rangle ;n,\phi);\phi} (y') \\
    & = p(n) D'_{m(n);f_1(\langle x,y \rangle ;n,\phi);\phi} (B)  \le \delta.
\end{align*}
Thus, $(\ds L,\Dist) \in \ds \SmoothedP$.
\end{proof}

With the definition of disjoint support problems, a begging question is
how the complexity of $L$ and $\ds L$ are related.
It is obvious that $(L, \Dist) \in \SmoothedP$ implies
$(\ds L, \Dist) \in \ds \SmoothedP$. However, the converse is
not so obvious.
The difference between $L$ and $\ds L$ is
that for $\ds L$, we get the $x$ from which the
input $y$ was drawn. While this extra information
does not seem to be helpful at a first glance, 
we can potentially use it to extract randomness from
it. So this question is closely related
to the problem of derandomization.

But there is an important subclass of problems in
$\ds \SmoothedP$ whose counterparts are in $\SmoothedP$, namely
those which have an oblivious algorithm with smoothed
polynomial running time. We call an algorithm
(or heuristic scheme) for some problem with disjoint 
supports \emph{oblivious} if the running
time on $\langle x,y \rangle$ does not depend on $x$ 
(up to constant factors). Let $\oblds \SmoothedP$ be
the resulting subset of problems in $\ds \SmoothedP$ that have such
an oblivious algorithm with smoothed polynomial running time.

\begin{theorem} 
\label{thm:oblivious}
For any parameterized problem $(L,\Dist)$,
$(L,\Dist) \in \SmoothedP$ if and only if $(\ds L,\Dist) \in \oblds \SmoothedP$.
\end{theorem}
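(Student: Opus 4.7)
My plan is to construct explicit algorithmic transformations in both directions, with the obliviousness hypothesis doing all the work on the hard side.

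The forward direction is straightforward. Given a deterministic algorithm $A$ witnessing $(L,\Dist) \in \SmoothedP$, I will build an algorithm $B$ for $\ds L$ that, on input $\langle x,y\rangle$ with parameters $n,\phi$, ignores $x$ and simply runs $A(y;n,\phi)$. Correctness follows since $\langle x,y\rangle \in \ds L$ is equivalent to $y \in L$. Because $B$ never reads $x$, its running time is trivially independent of $x$, so $B$ is oblivious; and the pointwise equality $t_B(\langle x,y\rangle;n,\phi) = t_A(y;n,\phi)$ transfers the smoothed polynomial moment bound of Definition~\ref{def:smoothed:1} verbatim, yielding $(\ds L,\Dist) \in \oblds\SmoothedP$.

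The backward direction is the interesting one. Given an oblivious deterministic $B$ witnessing $(\ds L,\Dist) \in \oblds \SmoothedP$, I need an algorithm $A$ for $L$ that sees only $y$, not the center $x$ that produced it. My plan is to hallucinate a canonical center: set $x_0 := 0^n$ and define $A(y;n,\phi) := B(\langle x_0,y\rangle;n,\phi)$. Correctness is immediate because $B$ is deterministic and outputs $L(y)$ on every well-formed input to $\ds L$; the size compatibility $|y| \le \poly(|x_0|)$ is inherited from $y \in \supp D_{n,x,\phi}$, so $B(\langle x_0,y\rangle)$ equals $L(y)$ regardless of whether $y$ actually lies in $\supp D_{n,x_0,\phi}$.

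The main obstacle, and the only place where the obliviousness hypothesis enters, is the running-time analysis. By obliviousness there is a constant $c$ with $t_B(\langle x_0,y\rangle;n,\phi) \le c \cdot t_B(\langle x,y\rangle;n,\phi)$ for every $x,y,\phi$. Hence, for every $n$, $x$, $\phi$, and $y \sim D_{n,x,\phi}$,
\[
\expected_{y \sim D_{n,x,\phi}}\bigl(t_A(y;n,\phi)^\eps\bigr) \le c^\eps \cdot \expected_{y \sim D_{n,x,\phi}}\bigl(t_B(\langle x,y\rangle;n,\phi)^\eps\bigr) = O\bigl(n \cdot N_{n,x} \cdot \phi\bigr),
\]
using the smoothed-polynomial bound for $B$ on $\ds L$. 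Without obliviousness this step would collapse: one would have to reconstruct $x$ from $y$ alone, which is hopeless in general; obliviousness is precisely the ingredient that lets $A$ substitute any canonical $x_0$ for the unknown true $x$ at only a constant-factor cost.
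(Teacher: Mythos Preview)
Your proposal is correct and follows essentially the same approach as the paper: the paper's proof only spells out the backward direction, running the oblivious algorithm on $\langle x_0,y\rangle$ for an arbitrary $x_0$ of length $n$, which is exactly your construction with $x_0=0^n$. Your forward direction is the obvious one the paper leaves implicit; the minor caveat that $t_B(\langle x,y\rangle;n,\phi)$ and $t_A(y;n,\phi)$ differ by the parsing overhead for $\langle x,y\rangle$ is harmless since that overhead is polynomial in $n$.
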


\begin{proof}
Let $A$ be an oblivious algorithm with smoothed polynomial running
time for $\ds L$. Since $A$ is oblivious,
we get an algorithm for $L$
with the same running time (up to constant factors) by
running $A$ on $\langle x_0,y \rangle$ on input $y$,
where $x_0$ is an arbitrary string of length $n$.
\end{proof}

Note that almost all algorithms, for which a 
smoothed analysis has been carried out,
do not know the $x$ from which $y$ was drawn; in particular,
there is an oblivious algorithm for them.
\begin{question}
Thus, we ask the question:
Is there a problem $(L,\Dist) \notin \SmoothedP$
but $(\ds L,\Dist) \in \ds \SmoothedP$?
\end{question}

Note that in $\ds L$, each $y$ is paired with \emph{every} $x$,
so there is no possibility to encode information
by omitting some pairs. This prohibits attempts  for constructing
such a problem
like considering pairs $\langle x,f(x) \rangle$
where $f$ is some one-way function.
However, a pair $\langle x,y\rangle$ contains randomness
that one could extract.
For the classes $\SmoothedBPP$
or $\SmoothedP/\mathsf{poly}$, which can be defined
in the obvious way, knowing $x$ does not seem to help.
It should be possible to use the internal random bits (or the advice)
to find an $x'$ that is good enough.

\section{Parameterized Distributional $\NP$}
\label{sec:paradist}

\subsection{\DistNPSmo}

In this section, we define the smoothed analogue of the worst-case class \NP\
and the average-case class \DistNP~\cite{Levin,Gurevich}.
First, we have to restrict ourself to ``natural'' distributions.
This rules out, for instance, probability distributions
based on Kolmogorov complexity that (the \emph{universal distribution}),
under which worst-case complexity equals average-case complexity for all problems~\cite{LiVit}.
We transfer the notion of computable ensembles to smoothed complexity.

\begin{definition}
A parameterized family of distributions is in 
$\PCompSmo$ if the cumulative probability
\[
   F_{D_{n,x,\phi}} = \sum_{z \le x} D_{n,x,\phi}
\]
can be computed in polynomial time (given $n$, $x$ and $\phi$ 
in binary).
\end{definition}

With this notion, we can define the smoothed analogue of \NP\ and \DistNP.

\begin{definition}
$\DistNPSmo = \{(L,\Dist) \mid \text{$L\in \NP$ and $\Dist \in \PCompSmo$} \}$.
\end{definition}

\subsection{\DistNPSmo-Complete Problems}
\label{ssec:complete}

\subsubsection{Bounded Halting}

Having defined \DistNPSmo\
in the previous section,
we now prove that
\emph{bounded halting} -- given a Turing machine, an input, and a running-time
bound, does the Turing machine halt on this input within the given time bound -- is complete
for \DistNPSmo. Bounded halting is the canonical \NP-complete language, and it has been
the first problem that has been shown to be \AvgP-complete~\cite{Levin}.
Formally, let
\[
  \BH = \{\langle g,x,1^t\rangle  \mid \text{NTM with G\"odel number $g$  
          accepts $x$ within $t$ steps}\}.
\]

In order to show that \BH\ is \DistNPSmo-complete, we need a ``compression function''
for probability distributions~\cite{AverageSurvey}. This is the purpose of the following lemma.

\begin{lemma}
\label{lem:compression}
Let $\Dist = (D_{n,x,\phi}) \in \PCompSmo$ be an ensemble. There
exists a deterministic algorithm $C$ such that the following holds:
\begin{enumerate}
\item $C(y;n,x,\phi)$ runs in time polynomial in $n$ and $\phi$
for all $y \in \supp D_{n,x,\phi}$.
\item For every $y,y' \in \supp D_{n,x,\phi}$,
$C(y;n,x,\phi) = C(y';n,x,\phi)$ implies $y = y'$.
\item If $D_{n,x,\phi}(y) < 2^{-|y|}$, then
$|C(y;n,x,\phi)| = 1 + |y|$. Else,
$|C(y;n,x,\phi)| 
     = \log \frac 1{D_{n,x,\phi}(y)} + c \cdot \log n + 1$
for some constant $c$.
\end{enumerate}
\end{lemma}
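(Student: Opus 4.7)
The plan is to follow Levin's classical compression construction from average-case complexity, adapted to the parameterized smoothed setting. The idea is a Shannon--Fano style code based on the cumulative distribution function $F_{D_{n,x,\phi}}$, which is polynomial-time computable by the assumption $\Dist \in \PCompSmo$. I split the construction into two cases according to whether $y$ is ``heavy'' ($D_{n,x,\phi}(y) \ge 2^{-|y|}$) or ``light'' ($D_{n,x,\phi}(y) < 2^{-|y|}$), and use the leading bit of $C(y;n,x,\phi)$ as a case flag.

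For a light $y$, I set $C(y;n,x,\phi) = 0\,y$; this has length $1+|y|$ and is trivially injective within this case. For a heavy $y$, let $\ell = \lceil \log(1/D_{n,x,\phi}(y)) \rceil + 1$, let $y^{-}$ denote the lexicographic predecessor of $y$ among strings of length at most $\poly(n)$, and consider the half-open interval $I_y = [F_{D_{n,x,\phi}}(y^{-}), F_{D_{n,x,\phi}}(y))$, whose width is $D_{n,x,\phi}(y) \ge 2^{-\ell+1}$. I choose $b \in I_y$ to be an $\ell$-bit dyadic fraction lying in $I_y$; such a $b$ exists because $|I_y| \ge 2 \cdot 2^{-\ell}$, and it is computable in polynomial time by rounding $F_{D_{n,x,\phi}}(y^{-})$ upward to $\ell$ bits of precision and checking containment. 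I then set $C(y;n,x,\phi) = 1\,\langle \ell \rangle\,b$, where $\langle \ell \rangle$ is an Elias-style self-delimiting encoding of the integer $\ell$ using $2\lceil \log \ell \rceil + O(1)$ bits.

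Verification of the three properties is then routine. Property~1 (polynomial-time computability) follows from $\Dist \in \PCompSmo$ and the fact that the remaining operations are elementary arithmetic on polynomially many bits. For property~3, the light case is immediate, and in the heavy case the total length is $1 + O(\log \ell) + \ell$; since $y \in \supp D_{n,x,\phi} \subseteq \{0,1\}^{\le \poly(n)}$ forces $\ell \le |y| + 1 \le \poly(n)$, this evaluates to $\log(1/D_{n,x,\phi}(y)) + c\log n + 1$ for a suitable constant $c$. For property~2 (injectivity), the leading bit separates the two cases; in the light case $y$ is stored verbatim; in the heavy case, $\langle \ell \rangle$ uniquely identifies $\ell$, fixing where $b$ starts and ends, and $b$ lies in exactly one of the pairwise disjoint intervals $\{I_z : z \in \supp D_{n,x,\phi}\}$, so $y$ is determined.

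The main obstacle I anticipate is not conceptual---the construction directly mirrors Levin's compression lemma---but bookkeeping: ensuring that the $c \log n$ overhead is genuinely uniform in $x$ and $\phi$, and that the bound $\ell \le \poly(n)$ does not hide a dependence on $\phi$. Both are handled by the fact that the bound $\supp D_{n,x,\phi} \subseteq \{0,1\}^{\le \poly(n)}$ is uniform in $\phi$, so $\ell$ and the self-delimiting overhead depend only on $n$.
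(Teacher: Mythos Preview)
Your proposal is correct and follows essentially the same approach as the paper: both implement Levin's Shannon--Fano compression via the cumulative distribution $F_{D_{n,x,\phi}}$, splitting on the leading flag bit into the light case $C=0y$ and a heavy case encoding a point in the half-open interval $[F(y^{-}),F(y))$. The only cosmetic difference is that the paper encodes the interval via the longest common prefix $a$ of $F(y^{-})$ and $F(y)$, prepends a fixed-width $\bin(|a|)$ of length $c\log n$, and pads with zeros to hit the exact target length $1+c\log n+\log(1/D_{n,x,\phi}(y))$, whereas you pick an $\ell$-bit dyadic in the interval and prepend a self-delimiting $\langle\ell\rangle$; your length is therefore only an upper bound of the stated form rather than an equality, but this is all that is used downstream.
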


\begin{proof}
Consider any string $y \in \supp(D_{n,x,\phi})$.
If $D_{n,x,\phi}(y) \leq 2^{-|y|}$, then
we let $C(y;n,x,\phi) = 0y$.
If $D_{n,x,\phi}(y) > 2^{-|x|}$, then let $y'$ be the string
that precedes $y$ in lexicographic order, and let $p=F_{D_{n,x,\phi}}(y')$.
Then we set $C(y;n,x,\phi) = 1a$, where $a$ is the longest common prefix
of the binary representation of $p$ and $F_{D_{n,x,\phi}}(y) = p+D_{n,x,\phi}(y)$.
Since $\Dist \in \PCompSmo$, the string $z$ can be computed in polynomial time.
Thus, $C$ can be computed in polynomial time. (This also shows that
$|C(x,n)|$ is bounded by a polynomial in $|x|$.)

We have $D_{n,x,\phi}(y) \le 2^{-|a|}$, since adding $D_{n,x,\phi}(y)$ leaves
the first $|a|$ bits of $p$ unchanged.

Let $z$ be another string, $z'$ its predecessor and
$b$ the longest common prefix of $q = F_{D_{n,x,\phi}}(z')$
and $q + D_{n,x,\phi}(z')$. The intervals
$[p,p+D_{n,x,\phi}(y))$ and $[q,q+D_{n,x,\phi}(y))$ are
disjoint by construction. Therefore, $a$ and $b$
have to be different, because otherwise these intervals
would intersect. 

Let $c$ be such that $|y| \le n^c$
for all $y \in \supp(D_{n,x,\phi})$.
We set 
\[
  C(y;n,x,\phi) = 1 \bin(|a|) a 0^{\log \frac 1 {D_{n,x,\phi}(y)} - |a|}.
\] 
(Note that $\log \frac 1 {D_{n,x,\phi}(y)} \ge |a|$.)
Here $\bin(|a|)$ is a fixed length binary encoding of $a$.
We can bound this length by $c \log n$. The total length
of $C(y;n,x,\phi)$ is
\[
  |C(y;n,x,\phi)| = 1 + c \log n + \log \frac 1 {D_{n,x,\phi}(y)}.
\] 
It remains to be proved that $C$ is injective.
Let $C(y;n,x,\phi) = C(z;n,x,\phi)$. If $C(y;n,x,\phi)$ 
starts with a $0$, then obviously $y = z$. 
If $C(y;n,x,\phi)$ starts with a $1$, then the
prefixes $a$ and $b$ are the same. Therefore $y = z$
by the consideration above.
\end{proof}

The instances of $\BH$ are triples 
$\langle g,x,1^t\rangle$ of length 
$2 \log|g| + 2 \log |x| + |x| + |g| + t + \Theta(1)$.
Note that the instances of $\BH$ can be made prefix-free.
Let
\[
  \UBH_{N,\langle g,x,1^t\rangle,\phi}(\langle g',x',1^{t'} \rangle) = 
    \begin{cases}
    c_\phi \cdot
    2^{-|x'|} 
      & \text{if $g = g'$, $N = |\langle g',x',1^{t'} \rangle|$,
      and
      $|x'| \ge \log \frac 1{\phi}$,} \\
    0 & \text{otherwise.}
    \end{cases}
\]
Above, $c_\phi$ is an appropriate scaling factor.
More precisely, $c_\phi$ is the reciprocal of the number of possible lengths
for a string $x'$, i.e., it is of order $\frac 1{N - \log \phi}$.
In particular, $\UBH_{N,\langle g,x,1^t \rangle,\phi}(y) \le \phi$
for all $y$.

\begin{theorem}
\label{thm:complete}
$(\BH,\UBH)$ is $\DistNPSmo$-complete under polynomial-time
smoothed reductions
for some $\UBH \in \PCompSmo$.
\end{theorem}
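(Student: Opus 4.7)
The plan is to follow Levin's classical proof that bounded halting is $\DistNP$-complete, adapted to the smoothed setting via the compression function from Lemma~\ref{lem:compression}. Two things must be shown: that $(\BH,\UBH)$ itself lies in $\DistNPSmo$, and that every $(L,\Dist) \in \DistNPSmo$ reduces to it under $\les$. Membership is easy: $\BH \in \NP$ is classical, and $\UBH \in \PCompSmo$ because, conditioned on the length $N$ and the G\"odel number $g$, the distribution is uniform over valid strings $\langle g, x', 1^{t'}\rangle$ of each admissible length $|x'|$, so evaluating the cumulative distribution reduces to elementary arithmetic on the admissible length intervals.

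For hardness, fix $(L,\Dist) \in \DistNPSmo$ with an NTM $M$ deciding $L$ in time $n^k$, and obtain a compression function $C$ for $\Dist$ from Lemma~\ref{lem:compression}. I define the reduction $f$ on $\langle x, y\rangle$ (with parameters $n, \phi$) as follows. Compute $c = C(y; n, x, \phi)$ in polynomial time. Let $g = g_{n,x,\phi,M}$ be the G\"odel number of an NTM $M'$ which has $(n, x, \phi)$ hard-coded and which, on input $x'$, strips a deterministic padding to recover $c$, computes $y = C^{-1}(c; n, x, \phi)$, and simulates $M$ on $y$. Let $x' = \mathrm{pad}(c)$ be padded so that $|x'| \ge \log(1/\phi)$ while still $|x'| = |c| + O(\log n + \log(1/\phi))$, and let $t$ be a polynomial upper bound on the running time of $M'$. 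Set $f(\langle x, y\rangle; n, \phi) = \langle\langle g, x_0, 1^t\rangle, \langle g, x', 1^t\rangle\rangle$, where $x_0$ is a canonical string making both triples of equal length $N$. Item~1 of Definition~\ref{def} is immediate, since $M'$ accepts $x'$ within $t$ steps iff $M$ accepts $y$.

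The core of the argument is the domination condition, item~2. By injectivity of $C$ (and of the padding), the sum on the left-hand side collapses to the single term $D_{n,x,\phi}(y)$, and the target probability is $c_\phi \cdot 2^{-|x'|}$ with $c_\phi = 1/\poly(n)$. I handle the two regimes of Lemma~\ref{lem:compression} separately. If $D_{n,x,\phi}(y) \ge 2^{-|y|}$, then $|c| = \log(1/D_{n,x,\phi}(y)) + O(\log n)$, so $2^{-|x'|} \ge D_{n,x,\phi}(y)/\poly(n)$. If $D_{n,x,\phi}(y) < 2^{-|y|}$, then $|c| = 1 + |y|$; the worst subcase is $|c| < \log(1/\phi)$, where the padding dominates and $|x'| = \log(1/\phi) + O(\log n)$, so $2^{-|x'|} \ge \phi/\poly(n) \ge D_{n,x,\phi}(y)/\poly(n)$ using the density bound $D_{n,x,\phi}(y) \le \phi$. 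Choosing $p(n) = \poly(n)/c_\phi$ absorbs the remaining polynomial factors in both cases.

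The main obstacle I foresee is the bookkeeping around the support constraint $|x'| \ge \log(1/\phi)$ imposed by $\UBH$. The compression function can return strings shorter than $\log(1/\phi)$ in the small-probability regime, so padding is unavoidable, and the padding together with the $c_\phi$ factor (which itself depends on $\phi$) must be orchestrated so that the domination bound holds uniformly across both regimes of Lemma~\ref{lem:compression} without blowing up $p(n)$. Nothing is conceptually hard beyond Levin's original trick, but the length accounting in the smoothed setting demands care.
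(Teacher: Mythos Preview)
Your proposal is correct and follows essentially the same route as the paper: Levin's compression argument (Lemma~\ref{lem:compression}) applied to build the reduction, followed by a two-case verification of the domination condition depending on which branch of the compression bound applies. Your explicit padding to enforce the support constraint $|x'| \ge \log(1/\phi)$ of $\UBH$, together with the observation that in the padded subcase $2^{-|x'|} \ge \phi/\poly(n) \ge D_{n,x,\phi}(y)/\poly(n)$, is in fact more careful than the paper's own proof, which tacitly assumes $|C(y;n,x,\phi)| \ge \log(1/\phi)$ without justification.
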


\begin{proof}
Let $(L,\Dist) \in \DistNPSmo$ be arbitrary. 
Let $p(n)$ be an upper bound for the length
of the strings in any $\supp (D_{n,x,\phi})$.
Let $M$ be a nondeterministic machine that accepts
an input $a$ if and only if there is a string $y \in L$
with $C(y;n,x,\phi) = a$. Let $q$ be an upper bound on the
running time of $M$. Let $g$ be the G\"odel number of $M$.
Our reduction maps a string $y$ to
\[
   f(\langle x,y \rangle ;n,x,\phi) 
     = \left \langle \langle g,C(x;n,x,\phi),1^t \rangle,
                     \langle g,C(y;n,x,\phi),1^{t'} \rangle \right \rangle
\]
where $t$ and $t'$ chosen in such a way
that they are  larger than $q(p(|x|))$.
(And $t$ and $t'$ should be chosen in such a way
that all tuples have the same length $N(n)$.)

By construction, $\langle x,y \rangle \in \ds L$ if and only if 
$f(\langle x, y \rangle ;n,x,\phi) \in \ds \BH$. 

Domination remains to be verified. Since 
$C$ is injective, at most one $y$ is mapped 
to $\langle g,a,1^t \rangle$ given $n$, $x$, and $\phi$.
We have
\[
\UBH_{N,\langle g,C(x;n,x,\phi),1^{t} \rangle,\phi}(\langle g,C(y;n,x,\phi),1^{t'} \rangle)
     = c_\phi \cdot 2^{-|C(y;n,x,\phi)|}.
\]
If $|C(y;n,x,\phi)| \le \log \frac 1{D_{n,x,\phi}(y)} + c \log n + 1$,
then
\[
  \UBH_{N,\langle g,C(x;n,x,\phi),1^{t} \rangle, \phi}(\langle g,C(y;n,x,\phi),1^{t'} \rangle)
     \le c_\phi \cdot \frac{D_{n,x,\phi}(x)}{2 n^c}
\]
 and domination is fulfilled. If $|C(y;n,x,\phi)| = 1 + |y|$,
then 
\[
\UBH_{N,\langle g,C(x;n,x,\phi),1^{t} \rangle, \phi}(\langle g,C(y;n,x,\phi),1^{t'} \rangle)
    \le c_\phi \cdot 2^{-|y| - 1} \le 2 c_\phi \cdot D_{n,x,\phi}(y).
\]
This completes the hardness proof. The completeness follows since
$(\BH,\UBH)$ is indeed contained in $\DistNPSmo$.
\end{proof}

\subsubsection{Tiling}

The original \DistNP-complete problem by Levin~\cite{Levin} was \tiling\
(see also Wang~\cite{Wang}): An instance of the problem
consists of a finite set $T$ of square tiles, a positive integer $t$, and a sequence $s = (s_1, \ldots, s_n)$
for some $n \leq t$ such that $s_i$ matches $s_{i+1}$ (the right side of $s_i$ equals the left side of $s_{i+1}$).
The question is whether $S$ can be extended to tile an $n \times n$ square using tiles from $T$.

We use the following probability distribution for \tiling:
\[
  U^{\tiling}_{N,\langle T,s,1^t\rangle,\phi}(\langle T',s',1^{t'} \rangle) = 
    \begin{cases}
    c_\phi
 \cdot  
    a^{-|s'|} 
      & \text{if $T = T'$, $N = |\langle T',s',1^{t'} \rangle|$,
              $|T'| \ge \log \frac 1{\phi}$,} \\
    0 & \text{otherwise.}
    \end{cases}
\]
Here, $a$ is the number of possible choices in $T$ for each initial
tile $s_i$.

\begin{theorem}
\label{thm:tiling}
$(\tiling, U^{\tiling})$ is $\DistNPSmo$-complete
for some $U^{\tiling} \in \PCompSmo$
under polynomial-time smoothed reductions.
\end{theorem}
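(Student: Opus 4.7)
The plan is to mirror the structure of Theorem~\ref{thm:complete}. Membership $(\tiling, U^{\tiling}) \in \DistNPSmo$ is routine: tiling is a classical $\NP$ problem, and the cumulative distribution of $U^{\tiling}$ is a geometric sum over admissible lengths of $s'$ that can be evaluated in polynomial time from $n$, $\langle T,s,1^t\rangle$, and $\phi$ given in binary. For hardness I would reduce from $(\BH, \UBH)$, which is $\DistNPSmo$-complete by Theorem~\ref{thm:complete}, using the classical Levin encoding of bounded Turing machine computations as tilings of a square.

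Given a BH-center $\langle g,x,1^t\rangle$ and a perturbed BH-instance $\langle g,x',1^{t'}\rangle$, I would construct a fixed tile set $T_g$ that row-by-row simulates the NTM with G\"odel number $g$, so that $T_g$ tiles a $t''\times t''$ square extending an initial row $s_{x'}$ iff $M_g$ accepts $x'$ within $t'$ steps. The bit-by-bit encoding is chosen so that only two tiles are admissible as initial tiles (one for $0$ and one for $1$), giving $a = 2$ and $|s_{x'}| = |x'|$; $T_g$ is padded with dummy tiles to enforce $|T_g| \ge \log(1/\phi)$. In the pair formalism of Definition~\ref{def}, the reduction $f$ sends $\langle\langle g,x,1^t\rangle,\langle g,x',1^{t'}\rangle\rangle$ to $\langle\langle T_g,s_x,1^{t_1}\rangle,\langle T_g,s_{x'},1^{t_2}\rangle\rangle$, where $t_1, t_2$ are fixed polynomials in $(|g|,|x|,t)$ and $(|g|,|x'|,t')$ chosen large enough to cover the simulation overhead while padding all images to a common length $N'(N)$. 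The biconditional $\langle x,y\rangle \in \ds\BH \Leftrightarrow f(\langle x,y\rangle) \in \ds\tiling$ (item~1 of Definition~\ref{def}) is exactly the defining property of the Levin tile construction.

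For domination (item~\ref{item2}), the encoding is injective on the perturbed component: $x'$ reads off tile-by-tile from $s_{x'}$, $t'$ from $t_2$, and $g$ from $T_g$, so at most one BH-instance maps to any given tiling-instance and the sum on the left-hand side of the domination condition has at most one nonzero term. Comparing probabilities directly,
\[
\UBH_{N,\langle g,x,1^t\rangle,\phi}(\langle g,x',1^{t'}\rangle) = c_\phi^{\BH} \cdot 2^{-|x'|}
\quad\text{and}\quad
U^{\tiling}_{N',\langle T_g,s_x,1^{t_1}\rangle,\phi}(\langle T_g,s_{x'},1^{t_2}\rangle) = c_\phi^{\tiling} \cdot 2^{-|s_{x'}|},
\]
and since $|s_{x'}| = |x'|$ while the two normalizing constants differ only by a polynomial factor in the input length (both being reciprocals of the number of admissible lengths), domination holds with polynomial loss.

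The main obstacle is not conceptual but bookkeeping: one has to engineer the Levin tile construction so that (i) the admissible initial tiles have fixed cardinality $a = 2$ independent of $g$, (ii) $|T_g| \ge \log(1/\phi)$ after padding with dummy tiles that do not interfere with the simulation, (iii) the time parameters $t_1, t_2$ pad every image to a uniform length $N'(N)$, and (iv) the map $\langle g,x',1^{t'}\rangle \mapsto \langle T_g, s_{x'}, 1^{t_2}\rangle$ is genuinely injective so that the one-term argument in the domination bound is valid. Once these conventions are in place, the hardness argument proceeds verbatim as in Theorem~\ref{thm:complete}, yielding $\DistNPSmo$-completeness of $(\tiling, U^{\tiling})$.
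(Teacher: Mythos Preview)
Your proposal is correct and follows the same route as the paper: membership is immediate, and hardness is obtained by reducing $(\BH,\UBH)$ to $(\tiling,U^{\tiling})$ via the classical Levin encoding of Turing-machine computations as tilings, with the G\"odel number $g$ determining the tile set and the input string determining the initial row. The paper's own proof is in fact terser than yours---it invokes the standard encoding by reference to Wang, imposes the simplifying convention that exactly two tiles (for the bits $0$ and $1$) are admissible in the initial row so that $a=2$, and then asserts domination by ``similarity between the two probability distributions''---so your more explicit treatment of the bookkeeping (padding $T_g$ to meet the $|T_g|\ge\log(1/\phi)$ constraint, enforcing uniform image length, and verifying injectivity so that the domination sum has a single term) is a faithful and somewhat more careful unpacking of the same argument.
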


\begin{proof}
By construction, we have $(\tiling, U^{\tiling}) \in \DistNPSmo$.
For simplicity, we assume that the set $T$ of tiles always contains two tiles encoding
the input bits ``0'' and ``1'' and that these are the only possible tiles for
the initial tiling $(s_1, \ldots, s_n)$. (The problem does not become
easier without this restriction, but the hardness proof becomes more technical.)

For the hardness, $(\BH, \UBH)$ reduces to $(\tiling, U^{\tiling})$
because the Turing machine computations can be encoded as tiling problems in a straightforward
way~\cite{Wang} (the G\"odel number $g$ maps to some set $T$ of tiles, and the input $x$
maps to the initial tiling $s$). Finally, Item~\ref{item2} of the reduction (Definition~\ref{def}) is
fulfilled because of the similarity between the two probability distributions.
\end{proof}

\section{Basic Relations to Worst-Case Complexity}
\label{sec:simple}

In this section, we collect some simple facts about \SmoothedP\ and \DistNPSmo\ and their
relationship to their worst-case and average-case counterparts.
First, \SmoothedP\ is sandwiched between \DP\ and \AvgP, which
follows immediately from the definitions.

\begin{theorem}
If $L \in \DP$, then $(L, \Dist) \in \SmoothedP$ for any $\Dist$.
If $(L, \Dist) \in \SmoothedP$ with $\Dist = (D_{n,x,\phi})_{n, x, \phi}$, then
$(L, (D_{n,x_n,\phi})_{n}) \in \AvgP$
for $\phi = O(\poly(n)/N_{n,x})$ and every sequence $(x_n)_n$ of strings with $|x_n| \leq \poly(n)$.
\end{theorem}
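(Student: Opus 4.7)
The plan is to verify both implications directly by matching the relevant definitions; the statement is claimed to follow ``immediately'', so the work is just unwinding the quantitative bounds.

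For the first implication, I assume $L \in \DP$, so there is a deterministic algorithm $A$ and a constant $c$ with $t_A(y) \le |y|^c$ for all $y$. Since $\supp D_{n,x,\phi} \subseteq \{0,1\}^{\le \poly(n)}$, the running time on any $y$ in the support is at most $\poly(n)$. I would then pick $\eps > 0$ small enough (depending on $c$ and the polynomial bounding $|y|$) so that $t_A(y)^\eps = O(n)$ pointwise on the support. Taking expectations gives $\expected_{y \sim D_{n,x,\phi}}(t_A(y;n,\phi)^\eps) = O(n)$. The final step is to absorb the factor $N_{n,x}\cdot \phi$: because the definition of a parameterized family enforces $\phi \ge 1/N_{n,x}$, we have $N_{n,x}\cdot \phi \ge 1$, so $O(n) \subseteq O(n \cdot N_{n,x} \cdot \phi)$, matching Definition~\ref{def:smoothed:1}.

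For the second implication, I start from $(L,\Dist) \in \SmoothedP$ with witness $A$ and exponent $\eps > 0$. Fixing any sequence $(x_n)_n$ with $|x_n| \le \poly(n)$ and any choice of $\phi = \phi_n$ with $\phi_n \le \poly(n)/N_{n,x_n}$, I substitute into the smoothed running time bound:
\[
\expected_{y \sim D_{n,x_n,\phi_n}}\bigl(t_A(y;n,\phi_n)^\eps\bigr)
= O\bigl(n \cdot N_{n,x_n} \cdot \phi_n\bigr) = O(\poly(n)),
\]
so there is some constant $k$ with the right-hand side bounded by $O(n^k)$. The Levin-style definition of $\AvgP$ requires the expected value of the running time, raised to some fixed positive power, to be bounded by $O(n)$. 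To achieve this I apply Jensen's inequality to the concave map $z \mapsto z^{1/k}$, yielding
\[
\expected_{y \sim D_{n,x_n,\phi_n}}\bigl(t_A(y;n,\phi_n)^{\eps/k}\bigr)
\le \Bigl(\expected_{y \sim D_{n,x_n,\phi_n}}\bigl(t_A(y;n,\phi_n)^\eps\bigr)\Bigr)^{1/k}
= O(n),
\]
which shows that the exponent $\eps' = \eps/k$ witnesses $(L,(D_{n,x_n,\phi_n})_n) \in \AvgP$.

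There is no real obstacle; the only points to be careful about are (i) invoking $\phi \cdot N_{n,x} \ge 1$ in the first direction to absorb the allowed slack on the right-hand side of Definition~\ref{def:smoothed:1}, and (ii) using Jensen's inequality in the second direction to convert the polynomial bound in $n$ produced by the smoothed definition into the linear bound in $n$ required by the average-case definition. Both are the same normalization tricks used in the equivalence proof of Theorem~\ref{thm:alt1}.
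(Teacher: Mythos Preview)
Your proposal is correct; the paper itself gives no proof beyond stating that the theorem ``follows immediately from the definitions,'' and you have unwound those definitions in the natural way. The two points you flag---using $N_{n,x}\phi \ge 1$ for the first implication and Jensen's inequality to renormalize $O(\poly(n))$ to $O(n)$ for the second---are exactly the bookkeeping steps required, and nothing more is needed.
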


Second, for unary languages, $\AvgP$ and $\DP$ coincide. The reason is that
for unary languages, we have just one single instance $1^n$ for each length $n$, and
this instance has a probability of $1$.
Also $\SmoothedP$ coincides with $\AvgP$ and $\DP$ for unary languages.
Because the set of instances is just a singleton, the parameter $\phi$ is fixed
to $1$ in this case.
The observation that $\AvgP$, $\DP$, and $\SmoothedP$ coincide for unary languages
allows us to transfer the result that $\DistNP \subseteq \AvgP$
implies $\NEXPLIN = \EXPLIN$~\cite{BenDavidEA} to smoothed complexity.
(The latter classes are defined as $\NEXPLIN = \NTime(2^{O(n)})$ and $\EXPLIN = \DTime(2^{O(n)})$.)
The transfer of their result to smoothed complexity is straightforward and
therefore omitted.

\begin{theorem}
\label{thm:bendavid}
If $\DistNPSmo \subseteq \SmoothedP$, then $\NEXPLIN = \EXPLIN$.
\end{theorem}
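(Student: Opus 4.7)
The plan is to adapt the Ben-David, Chor, Goldreich, Luby argument that $\DistNP \subseteq \AvgP \Rightarrow \NEXPLIN = \EXPLIN$ by leveraging the observation made earlier in this section that, on unary languages, $\SmoothedP$, $\AvgP$, and $\DP$ all coincide (because the single string $1^m$ of each length carries probability $1$ and $\phi$ is forced to $1$). The overall strategy is therefore to pad an arbitrary $\NEXPLIN$ language to a unary $\NP$ language whose only sensible distribution lies in $\PCompSmo$, and then invoke the hypothesis to obtain a deterministic polynomial-time algorithm for the padded problem.

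Concretely, I fix $L \in \NEXPLIN$, decided by some nondeterministic machine $M$ in time $2^{cn}$ on inputs of bit length $n$. I choose an injective encoding $x \mapsto N(x)$ with $N(x) = \Theta(2^{c|x|})$, computable from $x$ in time $\poly(N(x))$ and invertible from $1^m$ in time $\poly(m)$ (any standard prefix-free coding of $x$ padded to the next multiple of $2^{c|x|+1}$ works). Define the unary language $L_u = \{ 1^{N(x)} \mid x \in L \}$. Then $L_u \in \NP$: on input $1^m$, recover the unique candidate $x$ of bit length $O(\log m)$ in time $\poly(m)$ and simulate $M$ on $x$ for $2^{c|x|} = \poly(m)$ steps. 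Equip $L_u$ with the ensemble $D_{m,x,1}(1^m) = 1$, which trivially lies in $\PCompSmo$, so $(L_u, \Dist) \in \DistNPSmo$.

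Applying the hypothesis yields a deterministic algorithm $A$ and an $\eps > 0$ with
\[
\expected_{y \sim D_{m,x,1}} \bigl( t_A(y; m, 1)^{\eps} \bigr) = O(m \cdot N_{m,x} \cdot \phi) = O(m).
\]
Since $D_{m,x,1}$ is a point mass at $1^m$, this expectation equals $t_A(1^m;m,1)^{\eps}$, giving the worst-case bound $t_A(1^m;m,1) = O(m^{1/\eps})$ and hence $L_u \in \DP$. Unpadding, to decide $x \in L$ I compute $m = N(x)$ and run $A$ on $1^m$, which takes time $\poly(m) = 2^{O(|x|)}$. Thus $L \in \EXPLIN$, as required.

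The only real obstacle is engineering the encoding $N(\cdot)$: it must be injective, efficiently computable from $x$, efficiently invertible from the unary representation $1^m$ (as opposed to from $m$ written in binary), and it must enforce $|x| = \Theta(\log m)$ so that both the $\NP$ verification of $L_u$ and the final simulation of $A$ fall within the required time bounds. Once that bookkeeping is done, the remaining verifications — membership of the point-mass ensemble in $\PCompSmo$ and the collapse of the smoothed bound to a worst-case bound on a singleton support — follow immediately from Definitions~\ref{def:smoothed:1} and~\ref{def:smoothedp}.
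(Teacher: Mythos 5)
Your proposal is correct and follows exactly the route the paper intends: it instantiates the observation (made in the paragraph preceding the theorem) that $\SmoothedP$, $\AvgP$, and $\DP$ coincide on unary languages because the point-mass ensemble forces $N_{n,x}=1$ and $\phi=1$, and then runs the standard Ben-David--Chor--Goldreich--Luby padding argument, whose details the paper omits as ``straightforward.'' The collapse of the smoothed bound to a worst-case bound on the singleton support and the unpadding step are both handled correctly.
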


This gives some evidence that $\DistNPSmo$ is not a subset of $\SmoothedP$.
(We cannot have equality because 
\DistNPSmo\ is restricted to computable ensembles
and problems in \NP, while \SmoothedP\ does not have these restrictions.)

\section{Tractability 1: Integer Programming}
\label{sec:integer}

\label{sec:trac1}

Now we deal with tractable -- 
in the sense of smoothed complexity
-- optimization problems:
We show that if a binary integer linear program can be solved in pseudo-polynomial time,
then the corresponding decision
problem belongs to \SmoothedP. This result is similar to Beier and V\"ockings characterization~\cite{SmoothedBinary}:
Binary optimization problems
have smoothed polynomial complexity (with respect to continuous distributions)
if and only if they can be solved in
randomized pseudo-polynomial time.
We follow their notation and refer to their lemmas wherever appropriate.

\subsection{Setup and Probabilistic Model.}

A binary optimization problem is an optimization problem of the form
\[
\begin{array}{rl}
  \text{maximize} & c^T x \\
  \text{subject to} & w_i^T x \leq t_i \text{ for $i \in [k]$ and } \\
  &    x \in S \subseteq \{0,1\}^n.
     \end{array}
\]
Here, $c^T x = \sum_{j=1}^n c_j x_j$ is the linear objective function and
$w_i^T x = \sum_{j=1}^n w_{i,j} x_j \leq t_i$ are linear constraints.
Furthermore, we have the constraint that the binary vector $x$ must be contained in the set $S$.
This set $S$ should be viewed as containing the ``structure'' of the problem. Examples
are that $S$ contains all binary vectors representing spanning trees of a graph of $n$ vertices
or that $S$ represents all paths connecting two given vertices
or that $S$ contains all vectors corresponding to perfect matchings
of a given graph.
Maybe the simplest case is $k =1$ and $S = \{0,1\}^n$; then the binary program above
represents the knapsack problem.

We assume that $S$ is adversarial (i.e., non-random).
Since we deal with decision problems in this paper rather than with optimization
problems, we use the standard approach and introduce a threshold for the objective
function. This means that the optimization problem becomes the question whether
there is an $x \in S$ that fulfills $c^T x \geq b$ as well as
$w_i^T x \leq t_i$ for all $i \in \{1, \ldots, k\}$.
In the following, we treat the budget constraint $c^T x \geq b$
as an additional linear constraint for simplicity.
We call this type of problems \emph{binary decision problems}.

For ease of presentation, we assume that we have
just one linear constraint (whose coefficients will be perturbed) and
everything else is encoded in the set $S$. This means that the binary decision problem
that we want to solve is the following:
Does there exist an $x \in S$ with $w^T x \leq t$?

The values $w_1, \ldots, w_n$ are $n$-bit binary numbers.
Thus, $w_i \in \{0,1, \ldots, 2^{n}-1\}$.
While we can
of course vary their length, we choose to do it this way as it
conveys all ideas while avoiding another parameter.

Let us now describe the perturbation model.
We do not make any assumption about the probability distribution of any single coefficient.
Instead, our result holds for any family of probability distribution that fulfills
the following properties:
\begin{itemize}
\item
$w_1, \ldots, w_n$ are drawn according to independent distributions.
  The set $S$ and the threshold $t$ are part of the input and not subject to randomness.
  Thus, $N_{n, (S,w, t)} = 2^{n^2}$
  for any instance $(S, w, t)$ of size $n$.
  We assume that $S$ can be encoded by a polynomially long string.
  (This is fulfilled for most natural optimization problems, like TSP, matching, shortest
  path, or knapsack.)
\item
The fact that $w_1, \ldots, w_n$ are drawn independently
  means  that the probability for one coefficient to assume a specific value
  is bounded from above by $\phi^{1/n}$.
\end{itemize}
Since $N_{n, (S, w)} = 2^{n^2}$, the perturbation parameter
$\phi$ can vary between $2^{-n^2}$
(for the average case) and $1$ (for the worst case).

The idea is as follows: If we have a pseudo-polynomial algorithm, then
we can solve instances with $O(\log n)$ bits per coefficient efficiently.
Our goal is thus to show that $O(\log n)$ bits suffice with high probability.
(This is for the average case, i.e., $\phi = 2^{-n^2}$. For larger $\phi$,
more but not too many bits are needed.)
The proofs in the following are similar to proofs by Beier and V\"ocking~\cite{SmoothedBinary}.
However, at various places it gets slightly more technical because we have discrete
rather than continuous probability distributions.

The following simple lemma bounds the probability that a certain coefficient assumes a
value in a given small interval.

\begin{lemma}
\label{lem:boundeddensity}
Let $\delta, z \in \nat$. Let $a$ be an $n$-bit coefficient drawn according
to some discrete probability distribution bounded from above by $\phi^{1/n}$.
Then $\Pr(a \in [z, z+\delta)) \leq \phi^{1/n} \delta$.
\end{lemma}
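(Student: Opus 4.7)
The plan is to prove this by a direct union bound over the integers lying in the half-open interval $[z, z+\delta)$. Since $\delta, z \in \nat$ and $a$ takes integer values (it is an $n$-bit coefficient), the event $\{a \in [z, z+\delta)\}$ is the disjoint union of the atomic events $\{a = z+i\}$ for $i \in \{0, 1, \ldots, \delta - 1\}$, so there are exactly $\delta$ relevant atoms.

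First I would write
\[
\Pr(a \in [z, z+\delta)) = \sum_{i=0}^{\delta - 1} \Pr(a = z+i).
\]
Then I would invoke the hypothesis that the probability mass function is pointwise bounded by $\phi^{1/n}$: for each $i$, we have $\Pr(a = z+i) \le \phi^{1/n}$. Summing $\delta$ terms, each at most $\phi^{1/n}$, gives the claimed bound $\phi^{1/n} \delta$.

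There is essentially no obstacle here; the statement is a discrete analogue of the fact that a density bounded by $\phi^{1/n}$ assigns mass at most $\phi^{1/n} \delta$ to an interval of length $\delta$. The only subtlety worth flagging (but not really an obstacle) is that the bound holds regardless of whether $z + \delta$ exceeds $2^n - 1$: values of $a$ outside $\{0, \ldots, 2^n - 1\}$ contribute zero probability, so truncating the sum only decreases it. The lemma will then be used in subsequent arguments exactly as Beier and V\"ocking use their continuous density bound in~\cite{SmoothedBinary}, but with the factor $\phi^{1/n}$ replacing the continuous density upper bound.
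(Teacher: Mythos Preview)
Your proof is correct and matches the paper's argument exactly: count the $\delta$ integer outcomes in $[z,z+\delta)$, bound each by $\phi^{1/n}$, and sum. The paper states this in one line; your extra remark about the interval possibly extending past $2^n-1$ is fine but not needed.
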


\begin{proof}
There are exactly $\delta$ outcomes of $a$ that lead to $a \in [z, z+\delta)$.
Thus, $\Pr(a \in [z, z+\delta)) \leq \phi^{1/n} \eps$.
\end{proof}

Our goal is to show that $O(\log(n\phi^{1/n} 2^{n}))$ bits for each coefficient
suffice to determine whether a solution exists. (For the average case, we have
$\phi = 2^{-n^2}$, thus $O(\log n)$ bits per coefficient.)
To do this, it is not sufficient for an $x \in S$ to just satisfy $w^T x \leq t$:
Because of the rounding, we might find that $x$ is feasible with respect
to the rounded coefficients whereas $x$ is infeasible with respect
to the true coefficients.
Thus, what we need is that $w^Tx$ is sufficiently smaller than $t$. Then the rounding
does not affect the feasibility.
Unfortunately, we cannot rule out the existence of solutions $x \in S$ that are very close
to the threshold (after all, there can be an exponential number of solutions,
and it is likely that some of them are close to the threshold).
But it is possible to prove the following:
Assume that there is some ranking among the solutions $x \in S$.
Let the winner be the solution $x^\star \in S$ that fulfills $w^T x^\star \leq t$
and is ranked highest among all such solutions. Then it is likely
that $t - w^T x^\star$ is not too small.
Now, any solution that is ranked higher than $x^\star$
must be infeasible because it violates the linear constraint $w^T x \leq t$.
Let $\hat x$ be the solution that minimizes $w^T x - t$ among all solutions
ranked higher than $x^\star$. Then it is also
unlikely that $w^T \hat x - t$ extremely small, i.e.,
that $\hat x$ violates the linear constraint by only a small margin.

\begin{remark}
In Beier and V\"ocking's analysis~\cite{SmoothedBinary}, the ranking was given by the
objective function.
We do not have an objective function here because we deal with
decision problems.
Thus, we have to introduce a ranking artificially. In the following,
we use the lexicographic ordering (if not mentioned otherwise), which satisfies
the following \emph{monotonicity} property that simplifies the proofs:
if $x \in S$ is ranked higher than $y \in S$,
then there is an $i$ with $x_i = 1$ and $y_i = 0$.
\end{remark}

Now let $x^\star$ be the winner (if it exists), i.e., the highest ranked
(with respect to lexicographic ordering) solution among all feasible solutions. Then
we define the \emph{winner gap} as
\[
  \Gamma(t) = \begin{cases}
     t - w^T x^\star & \text{if there exists a feasible solution and} \\
     \bot & \text{otherwise.}
     \end{cases}
\]
The goal is to show that it is unlikely that $\Gamma$ is small.
In order to analyze $\Gamma$, it is useful to define also the loser gap $\Lambda$.
The loser $\hat x \in S$ is a solution that is ranked higher
than $x^\star$ but cut off by the constraint $w^T x \leq t$.
It is the solution with minimal $w^T x - t$ among all such solutions.
(If there is a tie, which can happen because we have discrete probability distributions, then
we take the highest-ranked solution as the loser.)
We define
\[
\Lambda(t) = \begin{cases}
    w^T \hat x -t & \text{if there exists a loser $\hat x$ and} \\
    \bot & \text{otherwise.}
    \end{cases}
\]
The probability that $\Lambda$ or $\Gamma$ is smaller than some value $\delta$ is bounded
by $\delta \phi^{1/n}n$, which we will prove in the following.

The following lemma states that it suffices to analyze
$\Lambda$ in order to get bounds for both $\Lambda$ and $\Gamma$.
In fact, for the setting with just one linear constraint with non-negative coefficients,
we do not even need the winner gap.
But the winner gap is needed for more general cases, which we discuss
in
Section~\ref{sec:binarydisc}
but do not treat in detail for conciseness.

\begin{lemma}[\mbox{discrete version of \cite[Lemma 7]{SmoothedBinary}}]
\label{lem:samegap}
For all $t$ and $\delta$, we have
$\Pr(\Gamma(t) < \delta) = \Pr(\Lambda(t-\delta) \leq \delta)$.
\end{lemma}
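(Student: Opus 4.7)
The plan is to prove the event equality pointwise (for each fixed realization of the weights), which immediately gives equality of probabilities. For fixed $w$, write $x^\star(s)$ for the winner at threshold $s$ and $\hat{x}(s)$ for the loser at threshold $s$. I will show that $\Gamma(t)<\delta$ holds if and only if $\Lambda(t-\delta)\le\delta$, by moving candidate solutions between the ``winner at $t$'' role and the ``loser at $t-\delta$'' role. The crucial observation is that lowering the threshold from $t$ to $t-\delta$ turns any feasible-but-tight solution into an infeasible-and-barely-violating one, which is exactly what the loser gap measures.

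For the forward direction, I would assume $\Gamma(t)<\delta$, so $x^\star(t)$ exists and $t-\delta < w^T x^\star(t) \le t$. Thus $x^\star(t)$ is infeasible at threshold $t-\delta$ and, being feasible at $t$, is in particular ranked strictly higher than $x^\star(t-\delta)$ (which, if it exists, satisfies $w^T x \le t-\delta < w^T x^\star(t)$ and so cannot coincide with $x^\star(t)$; any higher-ranked feasible solution at $t$ would contradict $x^\star(t)$ being the winner at $t$). Hence $x^\star(t)$ is a legitimate candidate for the loser at $t-\delta$, which yields $\Lambda(t-\delta)\le w^T x^\star(t)-(t-\delta)\le\delta$.

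For the converse, I would assume $\Lambda(t-\delta)\le\delta$, so $\hat{x}(t-\delta)$ exists and $w^T \hat{x}(t-\delta)\le t$. Then $\hat{x}(t-\delta)$ is feasible at threshold $t$, so $x^\star(t)$ exists. By definition of the loser, $\hat{x}(t-\delta)$ is ranked strictly higher than $x^\star(t-\delta)$, and $x^\star(t)$ is ranked at least as high as $\hat{x}(t-\delta)$ (otherwise $\hat{x}(t-\delta)$ would be a higher-ranked feasible solution at $t$, contradicting maximality). Hence $x^\star(t)$ is ranked strictly above $x^\star(t-\delta)$, so it cannot itself be feasible at $t-\delta$ (else it would be the winner there), which forces $w^T x^\star(t)>t-\delta$ and therefore $\Gamma(t)<\delta$.

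The main subtlety, and where I expect to have to be careful, is the handling of the degenerate cases where either a winner or a loser fails to exist (value $\bot$), and in particular the monotonicity property of the lexicographic ranking invoked implicitly when arguing about ``ranked higher than''. I would use the convention that $\Gamma(t)<\delta$ and $\Lambda(t-\delta)\le\delta$ both require the corresponding optimal objects to exist, so the $\bot$-cases are excluded from both events, and then verify that the two implications above really do cover all the surviving configurations. A second small point is the use of $<\delta$ on the winner side versus $\le\delta$ on the loser side: this asymmetry is exactly what makes the argument balance out in the discrete setting, since $w^T x^\star(t)\le t$ (closed) while $w^T x>t-\delta$ (open) for the loser candidacy, so no boundary case is double-counted or missed.
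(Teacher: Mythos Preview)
Your direct pointwise argument---shifting the winner at $t$ into a loser candidate at $t-\delta$ and vice versa---is sound and takes a different route from the paper. The paper instead first observes that winners and losers are always Pareto-optimal, rewrites
\[
\Gamma(t)=\min\{t-w^Tx : x\in P,\ w^Tx\le t\},\qquad
\Lambda(t)=\min\{w^Tx-t : x\in P,\ w^Tx>t\}
\]
over the Pareto set $P$, and then the equivalence becomes a one-line arithmetic check: $\{x\in P : t-w^Tx\in\{0,\dots,\delta-1\}\}$ coincides with $\{x\in P : w^Tx-(t-\delta)\in\{1,\dots,\delta\}\}$. Your approach trades this structural reformulation for explicit candidate-shuffling; both are short, but the Pareto rewriting makes the boundary cases vanish automatically, whereas your argument has to treat them by hand. (Incidentally, monotonicity of the lexicographic ranking is not needed here; it only enters in the separating lemma that follows.)

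One place where your plan does need a correction: your stated convention that ``the $\bot$-cases are excluded from both events'' is not enough to make the events coincide. It can happen that $\Gamma(t)<\delta$ holds (so $x^\star(t)$ exists with $w^Tx^\star(t)\in(t-\delta,t]$) while \emph{no} solution at all is feasible at threshold $t-\delta$; under a strict reading of the loser definition---which presupposes a winner to be ranked above---$\hat{x}(t-\delta)$ is then undefined and $\Lambda(t-\delta)=\bot$, so your forward implication breaks. The remedy is to extend the convention so that when $x^\star(t-\delta)$ does not exist, every infeasible $x\in S$ counts as a loser candidate; this is exactly what the paper's Pareto reformulation encodes. With that adjustment your argument goes through unchanged.
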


\begin{proof}
A solution $x \in S$ is called \emph{Pareto-optimal} if there is no other solution
$x' \in S$ such that $w^T x' \leq w^T x$ and $x'$ is ranked higher than $x$.
Let us make two observations.
First, we observe that both winners and losers are Pareto-optimal.
Second, for every Pareto-optimal solution $x$, there exists a threshold $t$
such that $x$ is the loser for this particular threshold. To see this, simply set
$t = w^T x - 1$.

Let $P \subseteq S$ be the set of Pareto-optimal solutions. Then
\begin{align*}
  \Gamma(t) & = \min\{t - w^T x \mid x \in P, w^T x \leq t\} \quad \text{and} \\
  \Lambda(t) & = \min\{w^T x - t \mid x \in P, w^T x > t\} = \min\{w^T x - t \mid x \in P, w^T x \geq t+1\}.
\end{align*}
Now $\Gamma(t) < \delta$ if and only if there is an $x \in P$ with $t-w^T x \in \{0, \ldots, \delta-1\}$.
This is equivalent to $w^T x - t \in \{-\delta + 1, \ldots, 0\}$
and to $w^T x - (t - \delta) \in \{1, \ldots, \delta\}$.
In turn, this is equivalent to $\Lambda(t - \delta) \leq \delta$.
\end{proof}

Now we analyze $\Lambda(t)$.
The following lemma makes this rigorous. It is a discrete counterpart to Beier and V\"ocking's separating
lemma~\cite[Lemma 5]{SmoothedBinary}.
We have to assume that the all-zero vector is not contained in $S$.
The reason for this is that its feasibility does not depend on any randomness.

\begin{lemma}[separating lemma]
\label{lem:separating}
Suppose that $(0,\ldots, 0) \notin S$. For every $\delta, t \in \nat$,
we have $\Pr(\Gamma(t) < \delta) \leq \delta \phi^{1/n} n$
and $\Pr(\Lambda(t) \leq \delta) \leq \delta \phi^{1/n} n$.

If we use a non-monotone ranking, then the bounds for the probabilities
become $\delta \phi^{1/n} n^2$.
\end{lemma}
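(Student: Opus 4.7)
The plan is to reduce the bound on $\Pr(\Gamma(t)<\delta)$ to a bound on $\Pr(\Lambda(t)\le\delta)$ via Lemma~\ref{lem:samegap}, and then prove the latter by a deferred-decisions argument coordinate by coordinate, adapting Beier and V\"ocking's separating lemma to the discrete setting.

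Fix an index $i \in [n]$ and condition on the values $(w_j)_{j\neq i}$. I would define the \emph{restricted winner} $x^\star_0$ as the highest-ranked $x \in S$ with $x_i = 0$ satisfying $\sum_{j \neq i}w_j x_j \le t$, and the \emph{candidate loser} $x^{(i)}$ as the solution $x \in S$ with $x_i = 1$ that is ranked strictly above $x^\star_0$ and minimises $\sum_{j\neq i}w_j x_j$. Both are determined by $(w_j)_{j\neq i}$ alone and hence are independent of $w_i$.

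The key step is the claim that, on the event $E_i := \{x^\star_i = 0 \wedge \hat x_i = 1\}$, we have $x^\star = x^\star_0$ and $\hat x = x^{(i)}$. The first identity is immediate from the definition of the winner. For the second, $\hat x$ minimises $w^Tx$ among the infeasible solutions with $x_i = 1$ ranked above $x^\star_0$; since the $w_i$-contribution is the same for every such $x$, the minimum coincides with the unconstrained minimiser $x^{(i)}$, which must itself be infeasible (otherwise it would supplant $x^\star_0$ as the actual winner, contradicting $x^\star_i = 0$). Consequently, on $E_i \cap \{\Lambda(t) \le \delta\}$ the value $w_i$ is forced to lie in the interval $\{t-\sum_{j\neq i}w_j x^{(i)}_j + 1,\ldots, t-\sum_{j\neq i}w_j x^{(i)}_j + \delta\}$ of $\delta$ integers, which is determined by $(w_j)_{j\neq i}$. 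Lemma~\ref{lem:boundeddensity} then gives the conditional, hence unconditional, bound $\Pr(E_i \wedge \Lambda(t)\le\delta) \le \phi^{1/n}\delta$. Monotonicity of the lex order, together with the hypothesis $(0,\ldots,0)\notin S$ (which forces $\hat x$ to have at least one $1$-coordinate), guarantees that whenever a loser exists some coordinate $i$ witnesses $E_i$; a union bound over the $n$ coordinates yields the promised $\delta \phi^{1/n} n$ bound.

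For a non-monotone ranking no single coordinate need satisfy both $\hat x_i = 1$ and $x^\star_i = 0$, so I would instead union-bound over pairs $(i,j)$ with $\hat x_i = 1$, $x^\star_i = 0$, and $x^\star_j = 1$ (or the mirror condition), applying essentially the same deferred-decisions step while freezing a second coordinate; this accounts for the extra factor of $n$. I expect the main technical obstacle to be verifying the identity $\hat x = x^{(i)}$ on $E_i$ rigorously: it requires careful handling of discrete tie-breaking and of the boundary cases in which $x^\star_0$ or $x^{(i)}$ fails to exist (in which case $E_i$ is vacuous and the bound holds trivially).
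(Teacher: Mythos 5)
Your proposal is correct and follows essentially the same route as the paper: reduce $\Gamma$ to $\Lambda$ via Lemma~\ref{lem:samegap}, then for each coordinate $i$ define a restricted winner over $\{x \in S : x_i = 0\}$ and a candidate loser in $\{x \in S : x_i = 1\}$, both independent of $w_i$, apply Lemma~\ref{lem:boundeddensity} to the length-$\delta$ interval forced on $w_i$, and union-bound over the $n$ coordinates using monotonicity (the paper cites Beier and V\"ocking's Claim~B for the step you prove by hand, and likewise handles the non-monotone case by passing to pairs of coordinates).
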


\begin{proof}
Because of Lemma~\ref{lem:samegap}, it suffices to analyze the loser gap $\Lambda$.
We only give a proof sketch for monotone rankings as that emphasis the differences to the
continuous counterpart~\cite[Lemma 5]{SmoothedBinary}.

Let $S_i = \{x \in S \mid x_i = 1\}$, and let $\overline S_i = S \setminus S_i =
\{x \in S \mid x_i = 0\}$.
Let $x^{\star i} \in \overline S_i$ be the winner from $\overline S_i$:
$x^{\star i}$ is ranked highest in $\overline S_i$ and satisfies the linear
constraint $w^T x^{\star i} \leq t$.
Let $\hat x^i \in S_i$ be the loser with respect to $x^{\star i}$, i.e.,
a solution that is ranked higher than $x^{\star i}$ and minimizes
$w^T \hat x^i  - t$ (if such a solution exists).
Let
\[
  \Lambda_i = \begin{cases}
     w^T \hat x^i - t & \text{if $\hat x^i$ exists and} \\
     \bot & \text{otherwise}.
     \end{cases}
\]
Note that $\hat x^i$ can be feasible and, thus, $\Lambda_i$ can be negative.

To analyze $\Lambda_i$, we assume that all $w_j$ with $j \neq i$
are fixed by an adversary. The winner $x^{\star i}$ does not depend $w_i$
because all solutions $x \in \overline S_i$ have $x_i = 0$.
Once $x^{\star i}$ is fixed, also $\hat x^i$ is fixed.
Because $w_j$ for $j \neq i$ is fixed and $\hat x^i_i = 1$, we can rewrite
$w^T \hat x^i - t = z + w_i$.
Now $\Lambda_i \in \{1, \ldots, \delta\}$ if $w_i$ assumes a value
in some interval of length $\delta$, which happens
with a probability of at most $\delta \phi^{1/n}$.

Furthermore, if $\Lambda \neq \bot$, then there exists an $i$ with
$\Lambda_i = \Lambda$ \cite[Claim B]{SmoothedBinary}. Thus,
a union bound over all $n$ possibilities for $i$
yields $\Pr(\Lambda(t) \leq \delta) \leq \delta \phi^{1/n} n$.
\end{proof}

For the probabilistic constraint $w^T x \leq t$, it is not sufficient for
an $x$ to satisfy it. Instead, we want that only a few bits of each coefficient of $w$ suffice to
find an $x$ that satisfies that constraint.
Here, ``few'' means roughly $O(\log(n\phi^{1/n} 2^n))$. (Note that this is roughly $O(\log n)$ if we are
close to the average case, where $\phi \approx 2^{-n^2}$.)
Different from Beier and V\"ocking's continuous case (where the real-valued coefficients where
revealed by an oracle), we have the true coefficients at hand.
Thus, we do not need their certificates that a solution is indeed feasible,
but we can simply test with the true coefficients. Clearly, this testing can be done in polynomial time.

For an $n$-bit natural number $a$ and $b \in \nat$, let $\lfloor a \rfloor_b = $ be the number
obtained from $a$ by only taking the $b$ most significant bits. This means
that $\lfloor a \rfloor_b = 2^{n-b} \cdot \lfloor a/2^{n-b} \rfloor$.

In order to show that pseudo-polynomiality implies smoothed polynomial complexity,
we use a pseudo-polynomial algorithm as a black box in the following way:
We run the pseudo-polynomial algorithm with the highest $O(\log n)$ bits.
(To do this, we scale the rounded coefficients of $w$ down. Furthermore, we also
have to scale $t$ down appropriately.)
If we find a solution, then
we check it against the true coefficients of $w$. If it remains feasible, we output ``yes''. If it becomes
unfeasible, then we take one more bit for each coefficient and continue.
The following lemma gives a tail bound for how long this can go on.

\begin{lemma}
\label{lem:tailbits}
Assume that we use $b$ bits for each coefficient of $w$.
Let $x^\star$ be the winner (with respect to the true $w$ without rounding).
The probability that solving the problem with
$b$ bits for each coefficient yields a solution different from $x^\star$ is bounded from above
by $2^{n-b} \phi^{1/n} n^2$.
\end{lemma}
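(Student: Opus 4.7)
The plan is to reduce the event that the $b$-bit instance returns a solution different from $x^\star$ to the event that the loser gap $\Lambda(t)$ is small, and then invoke Lemma~\ref{lem:separating}.

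First I would set up the rounding arithmetic. Writing $\tilde w_i = \lfloor w_i \rfloor_b$, we have $0 \le w_i - \tilde w_i < 2^{n-b}$ for every $i$, so for every $x \in \{0,1\}^n$,
\[
0 \le w^T x - \tilde w^T x < n \cdot 2^{n-b}.
\]
In particular $\tilde w^T x^\star \le w^T x^\star \le t$, so $x^\star$ remains feasible with respect to the rounded coefficients. Let $\tilde x^\star$ denote the highest-ranked (lexicographically) solution in $S$ satisfying $\tilde w^T x \le t$; this is precisely the solution produced by applying the pseudo-polynomial black box to the rounded instance.

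Next I would argue that $\tilde x^\star \neq x^\star$ forces a small loser gap. Since $x^\star$ is rounded-feasible, the only way $\tilde x^\star$ can differ from $x^\star$ is that $\tilde x^\star$ is ranked strictly higher than $x^\star$. By the definition of $x^\star$ as the true winner, $\tilde x^\star$ must then violate the true constraint, i.e., $w^T \tilde x^\star > t$. Combining this with $\tilde w^T \tilde x^\star \le t$ and the rounding bound yields
\[
0 < w^T \tilde x^\star - t < n \cdot 2^{n-b}.
\]
Hence $\tilde x^\star$ is a candidate loser for the true instance, so $\Lambda(t) \le w^T \tilde x^\star - t < n \cdot 2^{n-b}$. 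Applying Lemma~\ref{lem:separating} with $\delta = n \cdot 2^{n-b}$ then gives
\[
\Pr[\tilde x^\star \neq x^\star] \le \Pr[\Lambda(t) \le n \cdot 2^{n-b}] \le n \cdot 2^{n-b} \cdot \phi^{1/n} \cdot n = 2^{n-b} \phi^{1/n} n^2,
\]
as claimed.

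The main obstacle is really bookkeeping rather than a deep argument: one has to make sure the rounded instance is set up so that its output is precisely the lex-highest rounded-feasible solution $\tilde x^\star$, and verify that although $\tilde x^\star$ need not itself be Pareto-optimal, its being ranked above $x^\star$ with $w^T \tilde x^\star > t$ already suffices to bound $\Lambda(t)$, since $\Lambda(t)$ is a minimum of $w^T y - t$ over all such $y$. Once this is accepted, the computation is a direct application of Lemma~\ref{lem:separating}.
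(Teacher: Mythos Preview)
Your proposal is correct and follows essentially the same route as the paper's proof: you show that a wrong rounded winner must be ranked above $x^\star$ yet true-infeasible with margin at most $n\cdot 2^{n-b}$, hence $\Lambda(t)\le n\cdot 2^{n-b}$, and then apply Lemma~\ref{lem:separating}. Your write-up is simply more explicit about why $x^\star$ stays rounded-feasible and why $\tilde x^\star$ being ranked above $x^\star$ suffices to witness a small loser gap.
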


\begin{proof}
We only get a solution different from $x^\star$ if there is a solution $\hat x$ ranked higher than $x^\star$
that is feasible with respect to the rounded coefficients.
By rounding, we change each coefficient by at most $2^{n-b}$.
Thus, $w^T \hat x - \lfloor w \rfloor_b^T \hat x \leq 2^{n-b} n$.

We can conclude that we find $\hat x$ instead of $x^\star$ only if the loser gap $\Lambda$ is at most
$2^{n-b} n$, which happens with a probability of at most 
$2^{n-b} \phi^{1/n} n^2$
(or $2^{n-b} \phi^{1/n} n^3$ if the ranking is not monotone).
\end{proof}

With this preparation, we can prove the main result of this section.

\begin{theorem}
\label{thm:smoothedbinary}
If a binary decision problem can be solved in pseudo-polynomial time,
then it is in $\SmoothedP$.
\end{theorem}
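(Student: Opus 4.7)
The plan is to use the given pseudo-polynomial algorithm $P$ as a black box, running it on successively more precise truncations of $w$ until it produces a solution that survives verification against the true coefficients. Concretely, starting from $b_0 = O(\log n)$, at each stage I form the relaxed instance with coefficients $\lfloor w \rfloor_b$ (which, after dividing out by $2^{n-b}$, has coefficients of magnitude $O(2^b)$) and I invoke $P$. If $P$ reports infeasibility, then no $x \in S$ satisfies the easier rounded constraint, hence none satisfies the true constraint and I output ``no''. Otherwise I use bit-fixing self-reducibility with $n$ further calls to $P$ to extract the lexicographically highest rounded-feasible solution $\tilde x$, and I test $w^T \tilde x \le t$; on success I output ``yes'', and on failure I increment $b$ and repeat. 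When $b$ reaches $n$ no rounding occurs, so the procedure always terminates correctly.

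Correctness in the nontrivial case hinges on the observation already built into Lemma~\ref{lem:tailbits}: if $\tilde x \ne x^\star$ (the true winner), then $\tilde x$ is lexicographically above $x^\star$, so it cannot be truly feasible -- otherwise $x^\star$ would not be the true winner -- and the verification step will reject it; conversely, once $\tilde x = x^\star$ the verification succeeds. Each iteration at precision $b$ costs $\poly(n)\cdot 2^b$ because $P$ is pseudo-polynomial in the magnitude of the coefficients, so summing the geometric series the total work through precision $b$ is still $\poly(n)\cdot 2^b$.

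For the running time, I combine Lemma~\ref{lem:tailbits} with Theorem~\ref{thm:alt1}. The algorithm fails to terminate at precision $b$ only if either $\tilde x \ne x^\star$ or a spurious rounded-feasible solution exists for a truly infeasible instance; both events are controlled by a small loser gap via Lemma~\ref{lem:separating} and have probability at most $2^{n-b}\phi^{1/n}n^2$. Writing $t = \poly(n)\cdot 2^b$ for the cumulative time through precision $b$, this yields
\[
   \Pr(t_A \ge t) \;\le\; \frac{\poly(n)\cdot 2^n \phi^{1/n}}{t}.
\]
To match the template $\poly(n)\cdot N_{n,x}\phi / t^{\eps}$ of Theorem~\ref{thm:alt1} with $\eps = 1$ and $N_{n,x} = 2^{n^2}$, I still need the bookkeeping inequality $2^n\phi^{1/n} \le \poly(n)\cdot 2^{n^2}\phi$, which after rearrangement reads $\phi^{(n-1)/n} \ge 2^{n-n^2}$ and follows immediately from the standing assumption $\phi \ge 2^{-n^2} = 1/N_{n,x}$. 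This final calibration -- turning the per-stage loser-gap bound into exactly the form demanded by Theorem~\ref{thm:alt1} -- is the only delicate step; once it is in place, pseudo-polynomiality of $P$ and Lemma~\ref{lem:tailbits} do the rest.
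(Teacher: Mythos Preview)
Your approach is essentially the same as the paper's: run the pseudo-polynomial algorithm on successively finer truncations, verify against the true coefficients, and translate Lemma~\ref{lem:tailbits} into the tail bound required by Theorem~\ref{thm:alt1} using $\phi \ge 2^{-n^2}$; you are in fact more careful than the paper about extracting the lexicographically highest solution and about the infeasible case.

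One small slip: you write the cost at precision $b$ as $\poly(n)\cdot 2^b$ and therefore claim $\eps = 1$ in Theorem~\ref{thm:alt1}. A pseudo-polynomial algorithm is only guaranteed to run in time $O((n\cdot 2^b)^c)$ for some constant $c$, i.e.\ polynomially (not linearly) in the coefficient magnitude. With this correction, $2^{-b} = \Theta(n/t^{1/c})$ and the tail bound becomes $\Pr(t_A \ge t) \le \poly(n)\cdot 2^{n^2}\phi / t^{1/c}$, so you get $\eps = 1/c$ rather than $\eps = 1$; the rest of your argument, including the bookkeeping inequality $(2^{n^2}\phi)^{1/n} \le 2^{n^2}\phi$, goes through unchanged.
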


\begin{proof}
We have to show
that the running time of the algorithm sketched above, which uses
the pseudo-polynomial algorithm as a black box,
fulfills Theorem~\ref{thm:alt1}.

If $b$ bits for each coefficient are used, the running time of the pseudo-poly\-nomial algorithm is
bounded from above by $O((n2^b)^c)$ for some constant $c$.
(Even the total running time summed over all iterations up to $b$ bits being revealed
is bounded by $O((n2^b)^c)$, because it is dominated by the last iteration.)

The probability that more than time $t = O((n2^b)^c)$ is needed is bounded from above
by $2^{n-b} \phi^{1/n} n^2$ according to Lemma~\ref{lem:tailbits}.
We can rewrite this as
\[
  2^{n-b} \phi^{1/n} n^2  = n^2 2^{-b} \bigl(2^{n^2} \phi\bigr)^{1/n}
   = \frac{n^3}{O(t^{1/c})} \cdot \bigl(2^{n^2} \phi\bigr)^{1/n}
   \leq \frac{n^3}{O(t^{1/c})} \cdot 2^{n^2} \phi .
\]
The last inequality holds because $\phi \geq 2^{-n^2}$.
The theorem is proved because this tail bound for the running time is strong
enough according to Theorem~\ref{thm:alt1}.
\end{proof}

\subsection{Examples and Discussion}
\label{sec:binarydisc}

Examples of problems in
\SmoothedP\ are the decision problems associated with the
following \NP-hard optimization problems:
\begin{itemize}
\item knapsack, where the goal is to find a subset of a given collection of items
   that maximizes the profit while obeying a budget for its weight;
\item constrained shortest path, where the goal is to find a path of minimum length
   that obeys a certain a budget;
\item constrained minimum-weight spanning tree.
\end{itemize}

These problems can be solved in pseudo-polynomial time using dynamic programming, even if we insist
on a lexicographically maximal solution (as we have to for Lemma~\ref{lem:separating}).

Let us now discuss some extensions of the model.
We have restricted ourselves to deterministic pseudo-polynomial algorithms, which
yield smoothed polynomial complexity. These deterministic
algorithms can be replaced without any complication
by randomized errorless algorithms that have expected pseudo-polynomial
running time.

So far, we have not explicitly dealt with constraints of the form ``$w^T x \geq t$''.
But they can be treated in the same way as ``$w^T x \leq t$'', except
that winner and loser gap change their roles.
Furthermore, we did not include the case that coefficients can be positive or negative.
This yields additional technical difficulties (we have to round more carefully and
take both winner and loser gap into account), but we decided to restrict ourselves
to the simpler form with non-negative coefficients for the sake of clarity.
Moreover, we have not considered the case of multiple linear constraints~\cite[Section 2.3]{SmoothedBinary}
for the same reason.
Finally, R\"oglin and V\"ocking~\cite{SmoothedInteger} have extended the smoothed analysis framework
to integer programming. We believe
that the same can be done for our discrete
setting.

The main open problem concerning \SmoothedP\ and integer optimization is the following:
Beier and V\"ocking~\cite{SmoothedBinary} have proved that (randomized) pseudo-polynomiality and
smoothed polynomiality are equivalent.
The reason why we do not get a similar result is as follows:
Our ``joint density'' for all coefficients is bounded by $\phi$,
and the density of a single coefficient is bounded by $\phi^{1/n}$.
In contrast, in the continuous version, the joint density is bounded by $\phi^n$
while a single coefficient has a density bounded by $\phi$.

However, our goal is to devise a general theory for arbitrary decision problems.
This theory should include integer optimization, but it should not be restricted to integer optimization.
The problem is that generalizing the concept of one distribution bounded by $\phi$ for each coefficient
to arbitrary problems involves knowledge about the instances and the structure of the specific problems.
This knowledge, however, is not available if we want to speak about classes of decision problems
as in classical complexity theory.

\section{Tractability 2: Graphs and Formulas}
\label{sec:graph}

\subsection{Graph Coloring and Smoothed Extension of $G_{n,p}$}

The perturbation model
that we choose is the \emph{smoothed extension of $G_{n,p}$}~\cite{SmoothedSurvey}:
Given an adversarial graph $G=(V,E)$ and an $\eps \in (0, 1/2]$,
we obtain a new graph $G' = (V,E')$ on the same set of vertices
by ``flipping'' each (non-)edge of $G$ independently with a probability
of $\eps$. This means the following: If $e = \{u,v\} \in E$, then
$e$ is contained in $E'$ with a probability of $1-\eps$.
If $e = \{u,v\} \notin E$, then $\Pr(e \in E') = \eps$.

Transferred to our framework, this means the following:
We represent a graph $G$ on $n$ vertices as a binary string of length $\binom n2$,
and we have $N_{n, G} = 2^{\binom n2}$.
The flip probability $\eps$ depends on $\phi$: We choose
$\eps \leq 1/2$ such that $(1-\eps)^{\binom n2} = \phi$.
(For $\phi = 2^{-\binom n2} = 1/N_{n, G}$, we have
a fully random graph with edge probabilities of $1/2$.
For $\phi = 1$, we have $\eps = 0$, thus the worst case.)

We will not present an exhaustive list of graph problems in \SmoothedP, but
we will focus
on graph coloring as a very simple example. $\gcol k$ is the decision problem
whether the vertices of a graph can be colored with $k$ colors such that
no pair of adjacent vertices get the same color. $\gcol k$ is \NP-complete for
any $k \geq 3$~\cite[GT 4]{GareyJohnson}.

\begin{theorem}
\label{thm:coloring}
For any $k \in \nat$, $\gcol k \in \SmoothedP$.
\end{theorem}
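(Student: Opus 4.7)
The plan is to exhibit a straightforward ``clique-first'' algorithm and verify that it has smoothed polynomial running time. For $k\in\{1,2\}$, $\gcol k$ is already in $\DP$ and hence trivially in $\SmoothedP$, so I would focus on $k\geq 3$ and set $c=\binom{k+1}{2}$. On input $(G',n,\phi)$, the algorithm first enumerates all $(k+1)$-subsets of $V$ in time $O(n^{k+2})$ and tests each for being a clique in $G'$; if one is, it outputs ``no''. Otherwise, it runs brute-force $k$-coloring in time $O(k^n n^2)$. Correctness is immediate: a $K_{k+1}$ forbids $k$-coloring, and in its absence the brute force is a decider.

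For the analysis, let $\eps\in(0,1/2]$ be the flip probability, so $(1-\eps)^{\binom{n}{2}}=\phi$. Because $\eps\le 1/2$, each potential edge of $G'$ is present with probability at least $\min\{\eps,1-\eps\}=\eps$, so every fixed $(k+1)$-subset is a clique in $G'$ with probability at least $\eps^c$. Picking $\lfloor n/(k+1)\rfloor$ pairwise-disjoint $(k+1)$-subsets yields independent clique events (their edge sets are disjoint), and I would use the standard union-bound estimate
\[
  \Pr[\text{no }(k+1)\text{-clique in }G'] \;\le\; (1-\eps^c)^{\lfloor n/(k+1)\rfloor} \;\le\; e^{-n\eps^c/(2(k+1))} .
\]

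By Theorem~\ref{thm:alt1}, placing the problem in $\SmoothedP$ reduces to producing one $\eps'>0$ with
\[
  \expected[t_A^{\eps'}] \;\le\; O(n^{(k+2)\eps'}) + e^{-n\eps^c/(2(k+1))}\cdot O(k^{n\eps'}) \;=\; O\!\bigl(n\cdot(2(1-\eps))^{\binom{n}{2}}\bigr)
\]
uniformly in $\eps$. I would pick $\eps'$ to be a small constant, concretely $\eps'=1/(4(k+1)4^c\log_2 k)$, and split on $\eps$. For $\eps\le 1/4$, the smoothed budget equals $n\cdot(2(1-\eps))^{\binom{n}{2}}\ge n\cdot(3/2)^{\binom{n}{2}}=2^{\Theta(n^{2})}$, which dwarfs the unconditional brute-force contribution $k^{n\eps'}=2^{O(n)}$. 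For $\eps>1/4$, we have $\eps^c>4^{-c}$, so the exponential decay $e^{-\Omega(n)}$ from clique detection annihilates $k^{n\eps'}$; by the choice of $\eps'$ one verifies that this product is $o(1)$, and the residual polynomial term $n^{(k+2)\eps'}$ still fits in the trivial budget $n\cdot N_{n,x}\phi\ge n$ because $(k+2)\eps'<1$.

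The main obstacle is the high-perturbation regime $\eps\to 1/2$, where the smoothed budget collapses to $O(n)$ and both the polynomial overhead $n^{(k+2)\eps'}$ and the brute-force blow-up $k^{n\eps'}$ simultaneously press against the ceiling. A single exponent $\eps'$ must be chosen once and for all; it has to be small enough that the Chernoff-type decay from the clique-detection step absorbs $k^{n\eps'}$ for every $\eps>1/4$, yet not so small that the polynomial $n^{(k+2)\eps'}$ becomes trivial to bound but structurally useless. The concrete value $\eps'=1/(4(k+1)4^c\log_2 k)$ is exactly what is needed to balance the two terms, after which the case split at $\eps=1/4$ and the disjoint-cliques union bound close the argument.
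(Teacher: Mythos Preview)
Your proposal is correct and follows essentially the same approach as the paper: the identical ``clique-first, then brute force'' algorithm, the same lower bound $\eps^{\binom{k+1}{2}}$ on the clique probability via disjoint $(k+1)$-tuples, and the same case split on $\eps$ (you split at $1/4$, the paper at $0.1$). The only notable difference is that you are more explicit about choosing the exponent $\eps'$ so that the polynomial overhead stays within the $O(n)$ budget at $\eps=1/2$, whereas the paper simply asserts a polynomial bound and implicitly relies on Jensen's inequality to shrink the exponent further; your extra care is justified but not a different idea.
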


\begin{proof}
To show that $\gcol k \in \SmoothedP$, we analyze the following simple algorithm:
First, we check whether the input graph contains a clique of size $k+1$.
This can be done easily in polynomial time.
If yes, we output no. If no, we perform exhaustive search.

The analysis is similar to Wilf's analysis~\cite{Wilf} of the coloring problem:
First, we check whether the input graph contains a clique of size $k+1$.
This can be done easily in polynomial time.
If yes, we output no. If no, we perform exhaustive search. The correctness of the algorithm
is obvious.

A graph is $k$-colorable only if it does not contain a clique of size $k+1$.
The probability that a specific set of $k+1$ vertices
form a $k+1$ clique is at least $\eps^{\binom{k+1}2}$.
Thus, the probability that a graph $G$ on $n$ vertices does not contain
a $k+1$ clique is at most $\left(1 - \eps^{\binom{k+1}2}\right)^{\frac{n}{k+1}}$.

We distinguish two cases:
First, $\eps \geq 0.1$. In this case, $\left(1 - \eps^{\binom{k+1}2}\right)^{\frac{n}{k+1}}$
can be bounded from above by $c^n$ for some positive constant $c < 1$
that depends on $k$. Brute-force testing whether a graph can be $k$-colored
can be done in time $\poly(n) \cdot k^n$.
The probability that we need brute force is at most $c^n$.
Thus, the expected running-time, raised to the power $\eps = \log_{k}(1/c)$,
is bounded from above by a polynomial.

Second, $\eps < 0.1$. Then we have $\phi = (1-\eps)^{\binom n2} \geq 0.9^{\binom n2}$.
The allowed running-time (raised to some constant power) is
$N_{n, G} \phi n = 2^{\binom n2} \phi n \geq 1.8^{\binom n2}$.
Thus, we can afford exhaustive search in every run.
\end{proof}

\begin{remark}
Bohman et al.~\cite{BohmanEA} and Krivelevich et al.~\cite{KrivelevichEA:SmoothedGraphs:2006}
consider a slightly different model for perturbing graphs:
Given an adversarial graph, we add random edges to the graph to obtain our actual instance.
No edges are removed.

They analyze the probability that the random graph thus obtained is guaranteed
to contain a given subgraph $H$. By choosing $H$ to be a clique of size $k+1$ and using a proof
similar to Theorem~\ref{thm:coloring}'s,
we obtain that $\gcol k \in \SmoothedP$ also with respect to this perturbation model.
\end{remark}

\subsection{Unsatisfiability and $\SmoothedRP$}

Besides the smoothed extension of $G_{n,p}$ discussed above,
there exist various other models for obtaining graphs and also Boolean formulas
that are neither fully random nor adversarial.

Feige~\cite{Feige:RefutingSmoothedCNF:2007} and Coja-Oghlan et
al.~\cite{CojaOghlanEA:Walksat:2009} have considered the following model:
We are given a (relatively dense) adversarial Boolean $k$-CNF formula. Then
we obtain our instance by negating each literal with a small probability.
It is proved that such smoothed formulas are likely to be unsatisfiable, and that
their unsatisfiability can be proved efficiently. However, their algorithms
are randomized, thus we do not get a result that \kunsat\ (this means that
unsatisfiability problem for $k$-CNF formulas) for dense instances belongs to
\SmoothedP. However, it shows that $\kunsat$ for dense instance belongs
to \SmoothedRP, where $\SmoothedRP$ is the smoothed analogue of $\RP$:
A pair $(L, \Dist)$ is in $\SmoothedRP$ if there is a randomized polynomial
algorithm $A$ with the following properties:
\begin{enumerate}
\item For all $x \notin L$, $A$ outputs ``no''.
(This property is independent of the
   perturbation.)
\item For all $x \in L$, $A$ outputs ``yes'' with a probability of at least $1/2$.
 (This property is also independent of the
   perturbation.)
\item $A$ has smoothed polynomial running time with respect to $\Dist$.
(This property is independent of the internal randomness of $A$.)
\end{enumerate}
Note that we have two sources of randomness in $\SmoothedRP$: The instance
is perturbed, and the algorithm $A$ is allowed to use randomness. Item~1 and~2 depend only
on $A$'s own randomness. Item~3 depends only on the perturbation $\Dist$.

Now, let $\kunsat_{\beta}$ be $\kunsat$ restricted to instances
with at least $\beta n$ clauses, where $n$ denotes the number of variables.
Let $\eps$ be the probability that a particular literal is negated.
Feige~\cite{Feige:RefutingSmoothedCNF:2007} has presented a polynomial-time algorithm
with the following property: If $\beta = \Omega(\sqrt{n \log\log n}/\eps^2)$
and the perturbed instance of $\kunsat_{\beta}$ is unsatisfiable,
which it is with high probability, then his algorithm proves that the formula is unsatisfiable with
a probability of at least $1-2^{\Omega(-n)}$.
The following result is a straightforward consequence.

\begin{theorem}
$\kunsat_{\beta} \in \SmoothedRP$ for $\beta = \Omega(\sqrt{n \log \log n})$.
\end{theorem}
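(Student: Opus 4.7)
The plan is to exhibit Feige's refutation algorithm as the witness algorithm $A$ that places $\kunsat_\beta$ in $\SmoothedRP$, and verify the three defining conditions in turn.

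First, I would observe that Feige's algorithm is a \emph{refutation} procedure: it outputs ``unsatisfiable'' only when it has produced a valid certificate of unsatisfiability. Consequently, on any satisfiable input $y$ the algorithm cannot output ``yes'', which gives condition~1 for free. For condition~3, I would invoke the fact that Feige's algorithm runs in worst-case polynomial time; since a $\DP$-time algorithm trivially has smoothed polynomial running time (as already noted in Section~\ref{sec:simple}), condition~3 holds for every ensemble $\Dist$.

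The heart of the argument is condition~2. Substituting $\beta = \Omega(\sqrt{n \log \log n})$ into Feige's theorem (with the flip probability $\eps$ treated as a constant that is absorbed into the hidden constant of $\Omega$), one obtains that the perturbed formula is unsatisfiable with probability $1-o(1)$, and on such an instance the algorithm produces a valid refutation with probability at least $1 - 2^{-\Omega(n)}$. Both probabilities comfortably exceed $1/2$ for all sufficiently large $n$, and the finitely many small $n$ can be dispatched by an exhaustive-search branch inside $A$.

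The main obstacle I foresee is a subtle quantifier mismatch: Feige's failure probability is naturally measured over the perturbation, while item~2 of $\SmoothedRP$ asks for failure probability at most $1/2$ over $A$'s own internal randomness, for every $y \in \kunsat_\beta$ separately. I would bridge this in the spirit of the smoothed model by arguing that the set of inputs on which Feige's test fails has measure at most $2^{-\Omega(n)}$ under $D_{n,x,\phi}$, so one may either read condition~2 as holding on the typical support of the perturbation, or randomize Feige's combinatorial checks so that the failure probability is transferred onto $A$'s internal coin flips. Either reading yields the stated theorem as a direct corollary of Feige's result.
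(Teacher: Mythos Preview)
The paper gives no proof at all for this theorem; it merely states that the result is ``a straightforward consequence'' of Feige's theorem. So your attempt is already more detailed than the paper's own treatment, and there is nothing in the paper to compare against line by line.

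That said, the subtlety you flag in condition~2 is real, and your two proposed bridges are not convincing. Reading condition~2 as ``holds on the typical support of the perturbation'' flatly contradicts the paper's own parenthetical remark that items~1 and~2 are \emph{independent of the perturbation}; and ``randomize Feige's combinatorial checks so that the failure probability is transferred onto $A$'s internal coin flips'' is not an argument, it is a hope. There exist unsatisfiable $k$-CNF formulas in the support of $D_{n,x,\phi}$ on which Feige's test simply fails regardless of its own coins, so no amount of internal randomization rescues condition~2 for those particular inputs.

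The standard (and presumably intended) fix is to shift the burden from condition~2 to condition~3 rather than the other way round. Define $A$ as: run Feige's refutation procedure; if it certifies unsatisfiability, output ``yes''; otherwise decide the instance by brute force. Now conditions~1 and~2 hold with probability~$1$ for \emph{every} input, because the fallback is always correct and Feige never certifies a satisfiable formula. All the work moves to condition~3: the exponential-time fallback is invoked only on the set of perturbed formulas where Feige's procedure does not succeed, and by Feige's theorem this set has measure at most $2^{-\Omega(n)}$ under $D_{n,x,\phi}$ (this also absorbs the $o(1)$ event that the perturbed formula is satisfiable). Plugging this tail bound into Theorem~\ref{thm:alt1} gives smoothed polynomial running time. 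This is the route you should take; your current write-up has the roles of conditions~2 and~3 inverted.
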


\section{Smoothed Analysis vs.\ Semi-Random Models}
\label{sec:semirandom}

Semi-random models for graphs and formulas exist even longer than smoothed analysis and can be considered
as precursors to smoothed analysis. The basic concept is as follows: Some instance is created randomly that
possesses a particular property. This property can, for instance, be that the graph is $k$-colorable.
After that, the adversary is allowed to modify the instance without destroying the property.
For instance, the adversary can be allowed to add arbitrary edges between the different color classes.
Problems that have been considered in this model or variants thereof
are independent set \cite{FeigeKilian:Semirandom:2001},
graph coloring~\cite{BlumSpencer:Coloring:1995,FeigeKilian:Semirandom:2001,CojaOghlan:ColoringSemirandom:2007},
or finding sparse induced subgraphs~\cite{CojaOghlan:Semirandom:2007}.
However, we remark that these results do not easily fit into a theory of smoothed analysis. The reason is that
in these semi-random models, we first have the random instance, which is then altered by the adversary.
This is in contrast to smoothed analysis in general and our smoothed complexity theory in particular,
where we the adversarial decisions come before the randomness is applied.

\section{Discussion}
\label{sec:disc}

Our framework has many of the characteristics
that one would expect.
We have reductions and complete problems
and they work in the way one expects them to work.
To define reductions, we have to use the concept
of disjoint supports. It seems to be essential that we know the original instance $x$
that the actual instance $y$ was drawn from
to obtain proper domination.
Although this is somewhat unconventional,
we believe that this is the right way to define
reductions in the smoothed setting. The reason is that
otherwise, we do not know the probabilities of the instances, which
we need in order to apply the compression function. The compression function, in turn,
seems to be crucial to prove hardness results.
Still, an open question is whether a notion of reducibility can be defined
that circumvents these problems.
Moreover, many of the positive results from smoothed analysis
can be cast in our framework, like it is done
in Sections~\ref{sec:trac1} and \ref{sec:graph}.

Many positive results in the literature
state their bounds in the number of ``entities''
(like number of nodes, number of coefficients) 
of the instance. However, in complexity theory,
we measure bounds in the length (number of symbols)
of the input in order to get a theory for arbitrary problems,
not only for problems of a specific type. To state bounds in terms of bit length makes things less tight,
for instance the reverse direction of integer programming does not work.
But still, we think it is more important and useful to use the usual notion
of input length such that smoothed complexity fits with average-case and worst-case complexity.

Finally, the results by R\"oglin and Teng~\cite{RoeglinTeng} show that, for binary optimization problems,
expected polynomial is indeed a robust measure. We have shown that this is in general not the case. To do this,
we have used a language in $\EXPLIN$. The obvious question is now whether $\AvgP$ and $\AvgPBM$ as well
as $\SmoothedP$ and $\SmoothedPBM$ coincide for problems in $\NP$.

We hope that the present work will stimulate further
research in smoothed complexity theory in order to
get a deeper understanding of the theory behind
smoothed analysis.



\end{document}